\newtheorem{lemma}{Lemma}
\newtheorem{dfn}{Definition}
\newtheorem{problem}{Problem}
\newtheorem{prop}{Proposition}
\newcommand{\calC}{{\cal C }}
\newcommand{\calL}{{\cal L }}
\newcommand{\calM}{{\cal M }}
\newcommand{\calE}{{\cal E }}
\newcommand{\calG}{{\cal G }}
\newcommand{\calF}{{\cal F }}
\newcommand{\calV}{{\cal V }}
\newcommand{\CC}{\mathbb{C}}
\newcommand{\RR}{\mathbb{R}}
\newcommand{\la}{\langle}
\newcommand{\ra}{\rangle}
\begin{document}

\title{Simulation of rare events in quantum error correction}

\author{Sergey \surname{Bravyi}}
\affiliation{IBM T.J. Watson Research Center, Yorktown Heights NY 10598}

\author{Alexander \surname{Vargo}}
\affiliation{IBM T.J. Watson Research Center, Yorktown Heights NY 10598}

\date{\today}

\begin{abstract}
We consider the problem of calculating  the logical error probability
for a stabilizer quantum code subject to  random Pauli errors.
To access the regime of large code distances where logical errors are extremely unlikely
we adopt the splitting method widely used in Monte Carlo simulations of
rare events and Bennett's acceptance ratio method for estimating the free energy difference
between two canonical ensembles. 
To illustrate the power of these methods in the context of error correction,
we calculate the logical error probability $P_L$
for the 2D surface code on a square lattice with a pair of holes for all code distances $d\le 20$ and all error
rates $p$ below the fault-tolerance threshold. Our numerical results confirm the expected
exponential decay $P_L\sim \exp{[-\alpha(p)d]}$ and provide a simple fitting formula 
for the decay rate $\alpha(p)$. Both noiseless and noisy syndrome readout circuits are considered. 
\end{abstract}

\pacs{}
\maketitle

\section{Introduction}
\label{sec:intro}

Quantum error correction holds the promise of extending the coherence time of
quantum devices by utilizing redundant encoding of information.  Like its
classical counterpart, quantum error correction enables reliable storage of
encoded quantum states in the presence of noise by monitoring parity check
violations and applying suitable recovery operations.  Furthermore, many
quantum codes support a limited set of logical operations that can be applied
to encoded states without exposing them to noise.  Extensive theoretical work
rigorously confirmed the feasibility of large-scale fault-tolerant quantum
computing for a wide range of noise
models~\cite{Shor96,Knill04,AGP06,aharonov2006fault,ng2009fault}.

Several families of quantum codes have been proposed as candidates for
scalable fault-tolerant architectures, including concatenated
codes~\cite{aharonov1997fault,kitaev1997quantum,knill1998resilient},
surface codes~\cite{Kitaev03,Dennis01,BK:surface,Fowler08,RH:cluster2D},
surface codes with twists~\cite{BombinTwists}, color
codes~\cite{BMD:topo,LAR2011} and Turaev-Viro codes~\cite{Koenig10}.  Each
of these families contains an infinite sequence of codes labeled by a code
distance $d$.  The number of physical qubits and elementary operations
required to implement a single logical gate using a distance-$d$ code typically
grows polynomially as one increases $d$, whereas the probability of a logical
error $P_L$ decreases exponentially, that is, $P_L\le d^\beta \cdot
\exp{(-\alpha d)}$.  Here $\alpha$ and $\beta$ are constant coefficients
depending on the chosen family of codes, the noise model, and the decoding
algorithm. Furthermore,  the threshold theorem asserts that $\alpha>0$ for a
sufficiently small noise strength~\cite{AGP06,Dennis01,RHG07,fowler2012proof}.
A problem essential for estimating the overhead associated with error
correction is finding the minimum code distance that achieves the desired level
of noise suppression.  This requires a precise knowledge of the decay rate 
\[
\alpha=-\lim_{d\to \infty}\, \frac1d \log{(P_L(d))},
\]
since the exponential term gives the dominant contribution to $P_L$
for large code distances.  
The decay rate $\alpha$ is also a natural figure of merit for comparing
the performance of different decoding algorithms.

The present paper describes a new algorithm for computing the decay rate
$\alpha$ in the special case of the surface code family. We report numerical
results for two commonly studied noise models corresponding to noiseless and
noisy syndrome readout circuits.  Prior to our work, several methods have been
developed for computing the decay rate $\alpha$ of the surface codes, most
notably Monte Carlo simulation~\cite{RHG07} and fault path
counting~\cite{Dennis01,fowler2013analytic}.  Monte Carlo method attempts
to estimate $P_L$ by generating many random error configurations and computing
the fraction of trials that resulted in a logical error.  This method, however,
is not a viable option in the regime of small physical error rates and/or
large code distances, where logical errors are extremely unlikely. 

Fault path counting method adapted to surface codes by Dennis et
al~\cite{Dennis01} provides an upper bound on $P_L$ in the form of a weighted
sum over self-avoiding walks on a suitably defined lattice. This can be
translated to a lower bound on the decay rate $\alpha$, although the bound is
not expected to be tight.  A different version of the method, proposed by
Fowler~\cite{fowler2013analytic}, enables exact computation of $P_L$ in the
asymptotic regime of small error rates $p$.  In this regime the dominant
contribution to $P_L$ comes from minimum-weight uncorrectable errors that span
$\lceil d/2\rceil$ physical locations~\footnote{Any error of weight smaller
	than $d/2$ must be correctable by definition of the code distance.}. 
Accordingly, if the limit $p\to 0$ is taken for a fixed code distance $d$, one
can use an asymptotic formula $P_L=A_d\, p^{d/2}$, where $A_d$ is a constant
coefficient.  For relatively small values of $d$ one can compute $A_d$ by
summing the probabilities of all minimum-weight uncorrectable errors, see
Ref.~\cite{fowler2013analytic}.  This method, however, is not well-suited for
computing the decay rate $\alpha$ which requires taking the limit
$d\to \infty$ for a fixed error rate $p$.

Here we propose a new algorithm for estimating the logical error probability
$P_L$ and the decay rate $\alpha$.  It enables us, for the first time, to
access the regime of large code distances and moderately small error rates,
which we expect to be particularly important in the context of fault-tolerance.
The key ingredients of our algorithm are the splitting method and Bennett's
acceptance ratio method~\cite{bennett1976}.  The splitting method is a
standard tool in Monte Carlo simulation of rare events, see for instance
Ref.~\cite{rubino2009rare}. To compute the quantity $P_L(p)$ for a given
physical error rate $p$ we choose a sufficiently dense monotonic sequence of
error rates $p_1,\ldots,p_t$, where $p_t= p$ and $p_1$ is chosen such that
$P_L(p_1)$ can be computed efficiently by either the Monte Carlo simulation (in
which case $p_1$ must be sufficiently large) or by the fault path counting
method (in which case $p_1$ must be sufficiently small).  We then employ the
acceptance ratio method~\cite{bennett1976} to estimate the
quantity $R_j=P_L(p_{j+1})/P_L(p_j)$ for each $j=1,\ldots,t-1$.  This is accomplished
through a Metropolis-type subroutine for sampling uncorrectable errors from
a specific probability distribution. The Metropolis subroutine is the most time
consuming part of our algorithm and we propose several tricks for improving its
efficiency.  Finally, we find $P_L(p)$ from the obvious identity
$P_L(p)=P_L(p_t)=R_1 R_2 \cdots R_{t-1}P_L(p_1)$.  Once $P_L(p)$ has been
computed for several choices of the code distances $d$, the exponential fitting yields
the decay rate $\alpha(p)$.  Depending on whether the sequence $p_1,\ldots,p_t$
is increasing or decreasing we shall use a term upward or downward splitting.
To the best of our knowledge, the idea of using the splitting method in the
context of error correction was originally proposed by Wang, Harrington, and
Preskill~\cite{WHP2002}, see Section~IV(D) of Ref.~\cite{WHP2002}.  The use of
the acceptance ratio method in this context appears to be new.

\section{Summary of results}
\label{sec:results}

We begin by highlighting main features of the surface code and motivating its
choice as a testing ground for the proposed algorithm.  The surface code
achieves fault-tolerance by repeatedly measuring syndromes of 4-qubit parity
check operators on a 2D grid of physical qubits~\cite{Dennis01}.  Logical
qubits are introduced by creating special defect areas in the lattice in which
the pattern of syndrome measurements is altered.  By choosing a suitable
syndrome readout schedule one can change the location and shape of defects in a
way that simulates braiding, splitting, and fusion of topological
charges~\footnote{Note that a defect can carry a coherent
	superposition of different topological charges, as opposed to
	point-like anyons that must have a definite topological charge.}. 
By analogy with topological quantum computation~\cite{Kitaev03}, this enables
fault-tolerant implementation of some logical gates such as the
CNOT~\cite{BMD:codedef, RH:cluster2D,Fowler08}.  The code distance $d$ is
determined by the length of the shortest loop encircling a defect or the
shortest path connecting some pair of defects.  The code admits efficient
decoding by Edmonds's minimum weight matching
algorithm~\cite{Dennis01,Fowler12}, renormalization group
methods~\cite{Poulin09,DP10a1,BH11a}, 
or by the Markov chain Monte Carlo algorithm~\cite{Hutter13},
and features an error threshold close to
$1\%$ for the standard depolarizing noise model~\cite{Wang11}.

\begin{figure}[h]
\includegraphics[scale=0.5]{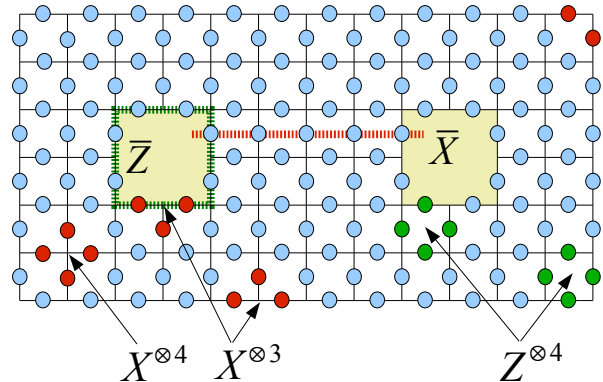}
\caption{\label{fig:flat}  (Color online) Surface code with  a pair of smooth defects
representing one logical qubit. 
Physical qubits are indicated  by solid circles. The codespace is defined as a common eigenspace of
site and plaquette stabilizers --- products of Pauli $X$ and $Z$ over 
stars (red) and plaquettes (green).  Logical Pauli operators $\overline{Z}$ and $\overline{X}$
correspond to a loop encircling one of the defects and the path connecting the two defects respectively.}
\end{figure}

Let us now describe our numerical results.  We consider the standard surface
code on a square lattice~\cite{Kitaev03,BK:surface} with a pair of defects
representing one logical qubit, see Fig.~\ref{fig:flat}.  A defect is defined
as a square block of plaquettes removed from the lattice.  Logical Pauli
operators $\overline{Z}$ and $\overline{X}$ are products of Pauli $Z$ and $X$
over a loop encircling one of the defects and over a path on the dual lattice
connecting the two defects respectively, see Fig.~\ref{fig:flat}.  The
dimensions of the lattice in our simulations were chosen such that the code
distance is $d=4r$, where $r$ is the linear size of the defects. Accordingly,
each defect is separated from the external boundary of the lattice and from the
other defect by distance $4r$, see Fig.~\ref{fig:decoding}.  Simulations were
performed only for $r\ge 2$ to avoid finite size effects.

We first analyze a toy error model with a noiseless syndrome readout where
every qubit is subjected to independent bit-flip and phase flip-errors, each applied
with probability $p$ (qubits on the external boundary of the lattice may
be treated specially, see Section~\ref{sec:implement} for details).  Once all
errors have been generated, one computes the error syndrome and finds the most
likely recovery operator consistent with the syndrome, see
Sections~\ref{sec:SC},\ref{sec:EC}.  A logical loop-like or path-like error
occurs if the recovery operator differs from the actual error by a logical
operator $\overline{Z}$ or $\overline{X}$ (modulo stabilizers).  We are
interested in the logical error probability $P_L=P_L(p,r)$ for both loop-like
and path-like logical errors.  In the chosen geometry the code distance
$d$ can only be a multiple of four; thus, it will be more convenient to define the
decay rate $\alpha$ such that 
\begin{equation}
\label{alpha}
P_L(p,r)\sim \exp{[-\alpha(p) r]}.
\end{equation}
It differs from the decay rate defined earlier by a factor of $4$.  The largest
error rate $p_{th}$ such that $\alpha(p)>0$ for all $p<p_{th}$ is known as the
error threshold. For the chosen model $p_{th}\approx 10.03\%$,
see~\cite{WHP2002}.  Our simulations were performed only for error rates $p\le
8\%$ to avoid finite-size effects which become important for $p\approx
p_{th}$.  The numerical results are presented in Fig.~\ref{fig:flat2345}.  The
logical error probability $P_L$ is shown as a function of $p$ for a few
small values of $r$.  This provides enough data to perform the exponential
fitting $P_L\sim \exp{(-\alpha r)}$ and extract the decay rate $\alpha$. The
plot of $\alpha$ as a function of $p$ is shown in Fig.~\ref{fig:flat_alpha}.
This figure also shows the estimate of $\alpha$ based on the asymptotic
low-$p$ formula $P_L\approx A_d\,  p^{d/2}$. It is clear that the latter
provides a poor approximation of $\alpha$ for large and moderately small error
rates, especially in the case of path-like errors.  Since $\alpha(p_{th})=0$,
one can easily extend the function $\alpha(p)$ to the interval $8\%\le p\le
p_{th}$ using a linear interpolation.  More details on the simulation methods
can be found in Section~\ref{sec:implement}.

\begin{figure}
\includegraphics[scale=0.65]{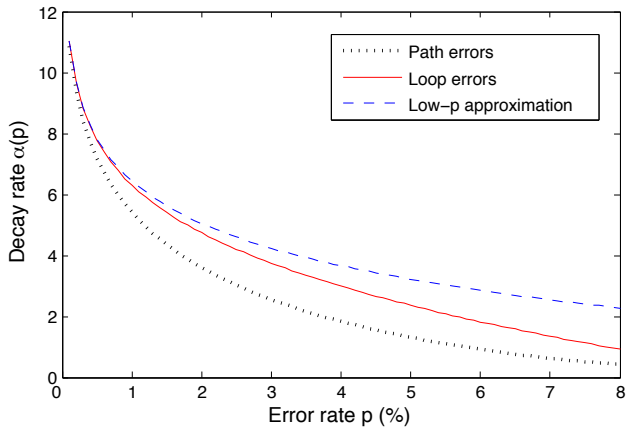}
\caption{\label{fig:flat_alpha}
(Color online) {\em Noiseless syndrome readout.}  The decay rate  $\alpha(p)$ 
in the exponential scaling  $P_L\sim \exp{[-\alpha(p) r]}$
for loop-like and path-like logical errors computed by
the upward splitting method. 
The dashed line shows the
estimate of $\alpha(p)$ based on the asymptotic low-$p$
formula $P_L\sim {4r \choose 2r} p^{2r}$, that is,
$\alpha(p)=-4\log{2} - 2\log{(p)}$.
}
\end{figure}

\begin{figure}[h]
\includegraphics[scale=0.65]{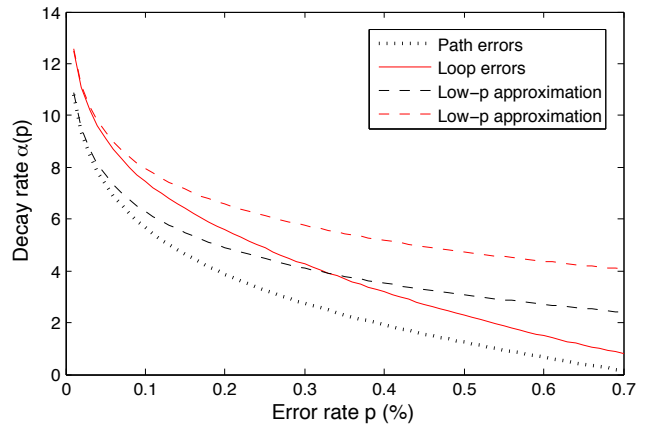}
\caption{\label{fig:alpha}
(Color online)
{\em Noisy syndrome readout.}  The decay rate  $\alpha(p)$ 
for loop-like and path-like logical errors computed by
the downward splitting method. 
The dashed lines show the estimate of $\alpha(p)$ based on
the low-$p$ asymptotic formula $P_L\sim A_r\, p^{2r}$,
that is, $\alpha(p)=-c-2\log{p}$, see  the last line of Eq.~(\ref{Fit}).}
\end{figure}

The following heuristic ansatz was used to fit the numerical data for $P_L$:
\begin{eqnarray}
P_L(p,r) &\approx& \exp{[-\alpha(p)(r-r_0)]} P_L(p,r_0),\label{Fit} \\
P_L(p,r_0) &=& \exp{[ 2r_0 \log{(p)} + x(p)]},\nonumber \\
-\alpha(p)&=&c + 2\log{(p)} + \log{( 1 + y(p))}. \nonumber
\end{eqnarray}
Here $r_0=2$ is the smallest defect size in our simulations, $c$ is a constant
coefficient, and the functions $x(p)$, $y(p)$ are low-degree polynomials such
that $y(0)=0$.  This ansatz is a slightly refined version of the exponential
scaling Eq.~(\ref{alpha}) that attempts to reproduce the pre-exponential
factor.  We fit $x(p)$ and $y(p)$ by the second and the third degree
polynomials,
\begin{equation}
\label{xy}
x(p)=\sum_{n=0}^2 x_n p^n, \quad y(p)=\sum_{n=1}^3 y_n x^n.
\end{equation}
The terms proportional to $\log{(p)}$ in Eq.~(\ref{Fit}) are chosen to
reproduce the asymptotic formula $P_L\sim A_d \, p^{d/2}$ in the limit
$p\to 0$.
The coefficients $c$, $x_i$ and $y_i$ found by fitting the numerical data
are listed in Table~\ref{table:fit}. 
 \begin{table}[h]
\begin{ruledtabular}
\begin{tabular}{c | c | c | c | c }
& \multicolumn{2}{c|}{Noiseless syndromes} & \multicolumn{2}{c}{Noisy syndromes} \\
\hline
& Loop errors & Path errors & Loop errors & Path errors \\
\hline
  $x_0$ & $3.18$ &$4.25$ &	$8.67$ &   $9.88$        \\
    $x_1$& $24.2$ & $64.6$  &	$405$	      &         $1.17\times 10^3$      \\
  $x_2$ &$19$ & $-347$&	$5.83\times 10^3$	      &                $-7.64\times 10^4$   \\
  $c$ & $4\log{2}$ & $4\log{2}$ & $5.86$		      &    $7.52$               \\
  $y_1$&$12.2$ &  $193$ & $551$		      &           $880$         \\
 $y_2$& $297$ &$-1570$  & $3.27\times 10^4$		      &             $4.69\times 10^3$      \\
   $y_3$& $0$ & $0$  & $5.72\times 10^7$		      &             $6.04\times 10^6$   \\
\end{tabular}
\caption{Coefficients  in the  fitting formulas 
Eqs.~(\ref{Fit},\ref{xy}).\label{table:fit}}
\end{ruledtabular}
\end{table}

\begin{figure}[htb]
\includegraphics[scale=0.4]{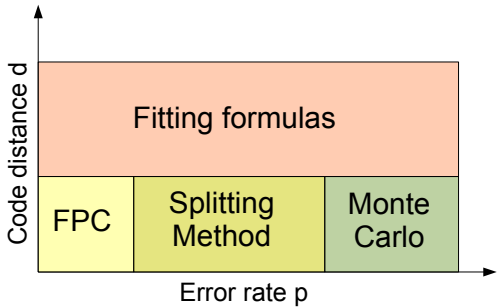}
\caption{Four methods of estimating the logical error probability $P_L$.
For small code distances $d$ one can compute $P_L$ numerically
by the Monte Carlo simulation (if $p$ is close to the threshold) or  
by the splitting method (if $p$ is sufficiently below the threshold).
The fault-path counting (FPC) is applicable for very small error rates. 
Numerical data are used to estimate the decay rate $\alpha$
in the fitting formula  $P_L\sim \exp{[-\alpha(p) d]}$ 
which is applicable for larger values of $d$. 
\label{fig:diagram}}
\end{figure}

Next we analyze a more realistic noise model with a noisy syndrome readout.
Here the syndromes of site and plaquette stabilizers are measured by applying a
sequence of CNOT gates which couple the respective code qubits with ancillary
qubits collecting the syndrome information, see Section~\ref{sec:noisy}.
Independent depolarizing errors occur with probability $p$ on each qubit
preparation, measurement, and CNOT gate.  Our noise model mostly coincides with the
one introduced by Fowler in  Ref.~\cite{Fowler08}. To enable reliable error
correction, the syndrome readout is repeated $d$ times, where $d=4r$ is the
code distance.  The lattice geometry is time-independent which corresponds to
the logical identity gate (fault-tolerant storage).  Once all syndromes have
been generated, the most likely configuration of errors consistent with the
syndrome and the corresponding recovery operator on code qubits are computed,
see Section~\ref{sec:EC}.  As before, a logical loop-like or path-like error
occurs if the recovery operator differs from the actual accumulated error on
code qubits by a logical operator $\overline{Z}$ or $\overline{X}$ (modulo
stabilizers).  We are interested in the logical error probability
$P_L=P_L(p,r)$ for both loop-like and path-like logical errors.  More
precisely, we define $P_L$ as the logical error probability divided by the
total number of syndrome readout steps, which is $4r$ in our simulations.
Estimates of the error threshold $p_{th}$ for a noisy syndrome readout vary
between $0.75\%$ and $0.9\%$, see~\cite{RHG07,Fowler12,fowler2013accurate}.  To
avoid finite size effects, simulations were performed only for $p\le 0.7\%$ (as
before, one can use linear interpolation to cover the interval between $0.7\%$
and $p_{th}$).  Our numerical results are presented in
Figs.~\ref{fig:alpha},\ref{fig:234}.  The fitting curves in Fig.~\ref{fig:234}
represent the ansatz Eqs.~(\ref{Fit},\ref{xy}) where the coefficients $c$,
$x_i$ and $y_i$ are defined in Table~\ref{table:fit}.  As before, the
numerically computed decay rate $\alpha$ is compared with its estimate based on
the low-$p$ asymptotic formula $P_L=A_d\, p^{d/2}$.  The coefficient $A_d$ was
found by setting $x(p)=x(0)$ and $y(p)=0$ in Eqs.~(\ref{Fit}).  We observe that
the asymptotic low-$p$ formula significantly overestimates $\alpha$ for large
and moderately small error rates.  More details on the simulation methods can
be found in Section~\ref{sec:noisy}.

The limiting factor in our simulations was the running time of the Metropolis
subroutine which grows rapidly as one increases the error rate, see
Section~\ref{sec:implement} for details. Accordingly, we were able to implement
the splitting method only for sufficiently small errors rates, $p\le p^*$,
where $p^*\approx p_{th}/2$.  Fortunately, the cutoff error rate $p^*$ was
large enough to enable Monte Carlo computation of $P_L(p)$ for all $p\ge p^*$.
In the case of noisy syndrome readout, we used downward splitting to access
error rates $p\le p^*$. For noiseless syndrome readout, where asymptotic
low-$p$ formulas for $P_L(p)$ are readily available, we used upward splitting
to access error rates $p\le p^*$.  Accordingly, in the latter case we were able
to test the correctness of the splitting method by comparing the values of
$P_L(p^*)$ found independently by the splitting and the Monte Carlo methods,
see Fig.~\ref{fig:flat2345}.  For noisy syndrome readout, the correctness of the
splitting method is partially confirmed by the fact that the function $P_L(p)$
computed by the two methods has the same derivative on both sides of $p^*$, see
Fig.~\ref{fig:234}.  A diagram illustrating the applicability region of each
simulation method is shown in Fig.~\ref{fig:diagram}.
 
We hope that the new simulation techniques and the fitting formulas can find
applications in the estimation of the fault-tolerance overhead for quantum
computing architectures based on the surface code.  Significant progress in
this direction has been recently made in
Refs.~\cite{RHG07,ghosh2012surface,fowler2012surface}.  We hope that our
results can refine estimates made in
Refs.~\cite{ghosh2012surface,fowler2012surface} by replacing the asymptotic
low-$p$ formulas for the logical error probability with more accurate
approximations, such as the one defined in Eqs.~(\ref{Fit},\ref{xy}) and
Table~\ref{table:fit}.  We also hope that the fitting formulas
Eqs.~(\ref{Fit},\ref{xy}) have applicability beyond the toy noise models
considered in this paper. 
 
In the rest of this paper we introduce the general idea of the splitting
method, see Section~\ref{sec:split}, and specialize it to the surface code
settings, see Sections~\ref{sec:SC}-\ref{sec:noisy}.  It should be emphasized
that at present the splitting method is a heuristic algorithm. Deriving
rigorous bounds on its running time and the approximation error goes beyond the
scope of this paper.  Nevertheless, we make first steps in this direction in
Section~\ref{sec:corr} (see Lemmas~1-4).  Some of our results such as the
maximum weight matching decoding algorithm presented in Section~\ref{sec:EC}
and a decoder-independent definition of correctability, see
Section~\ref{sec:corr}, might be interesting on their own right. 
We conclude by discussing some open problems in Section~\ref{sec:concl}.

 \begin{figure*}[htb]
\includegraphics[scale=0.6]{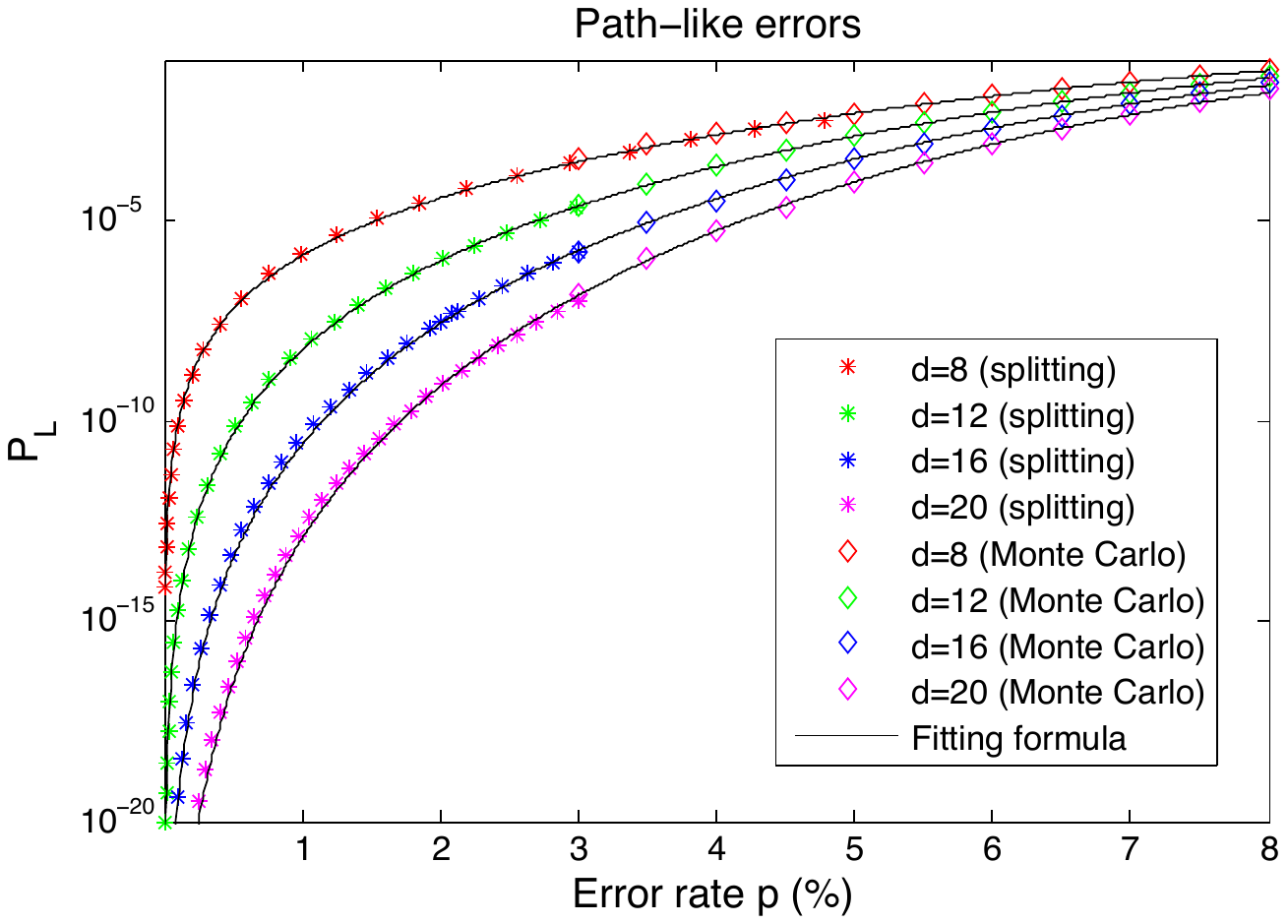}
\includegraphics[height=59mm]{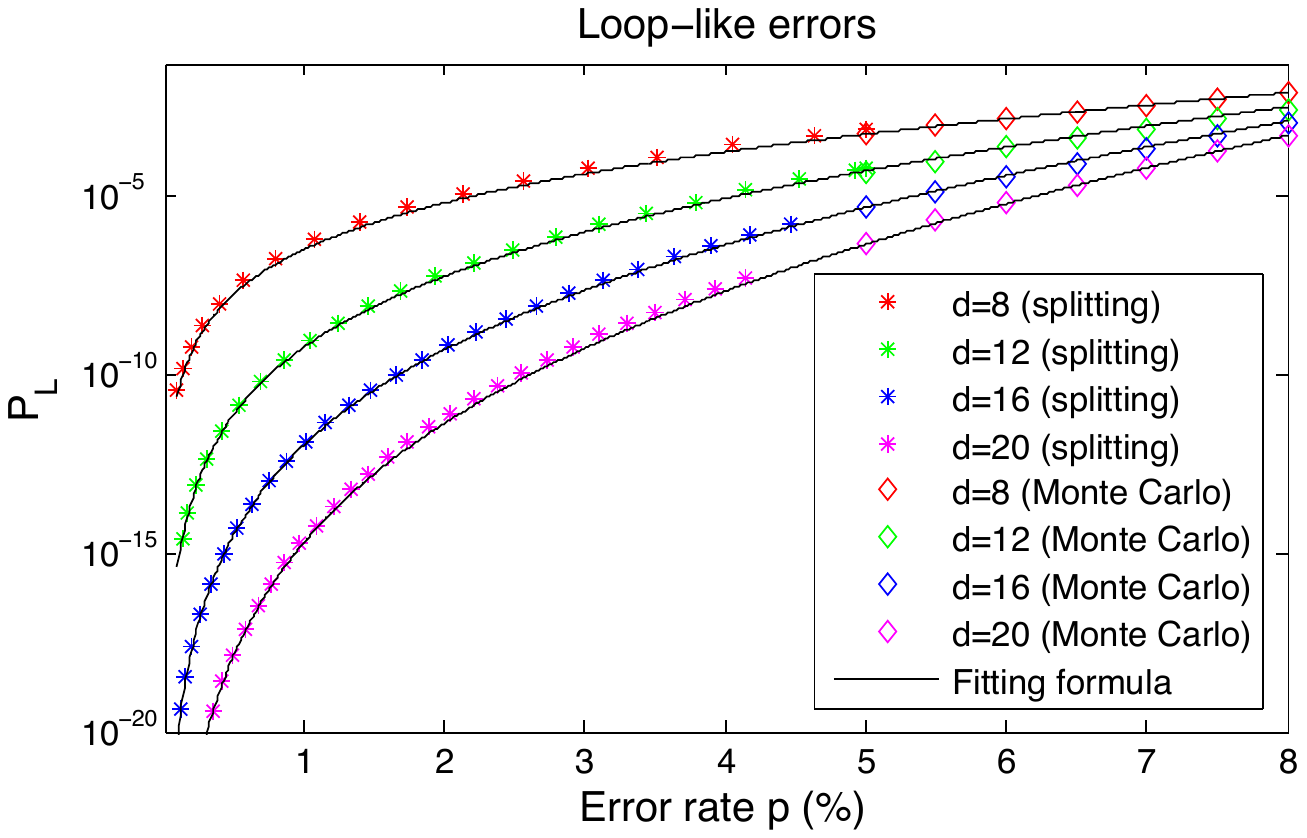}
\caption{\label{fig:flat2345}  (Color online) {\em Noiseless syndrome readout.}
Independent bit-flip and phase-flip errors occur with probability $p$ on each physical qubit.
Logical error probability  $P_L(p)$ has been  computed by the upward splitting method
for error rates $p\le p^*$ and by Monte Carlo method for $p\ge p^*$.
Here $p^*= 3\%$  for path-like errors and $p^*=5\%$ for loop-like errors. 
The splitting sequence starts at error rate $p_1=0.1\%$
where one can use asymptotic formulas
$P_L\approx 0.5 r {4r \choose 2r} p^{2r}$ for path-like errors
and $P_L\approx 0.5 {4r \choose 2r} p^{2r}$ for loop-like errors.
Here $r=d/4$ is the linear size of the defects. 
A good agreement between the values of $P_L(p)$ obtained by the
two methods is observed at the junction point $p=p^*$. 
Solid lines represent fitting
formulas Eqs.~(\ref{Fit},\ref{xy}).
}
\end{figure*}

\begin{figure*}[htb]
\includegraphics[scale=0.6]{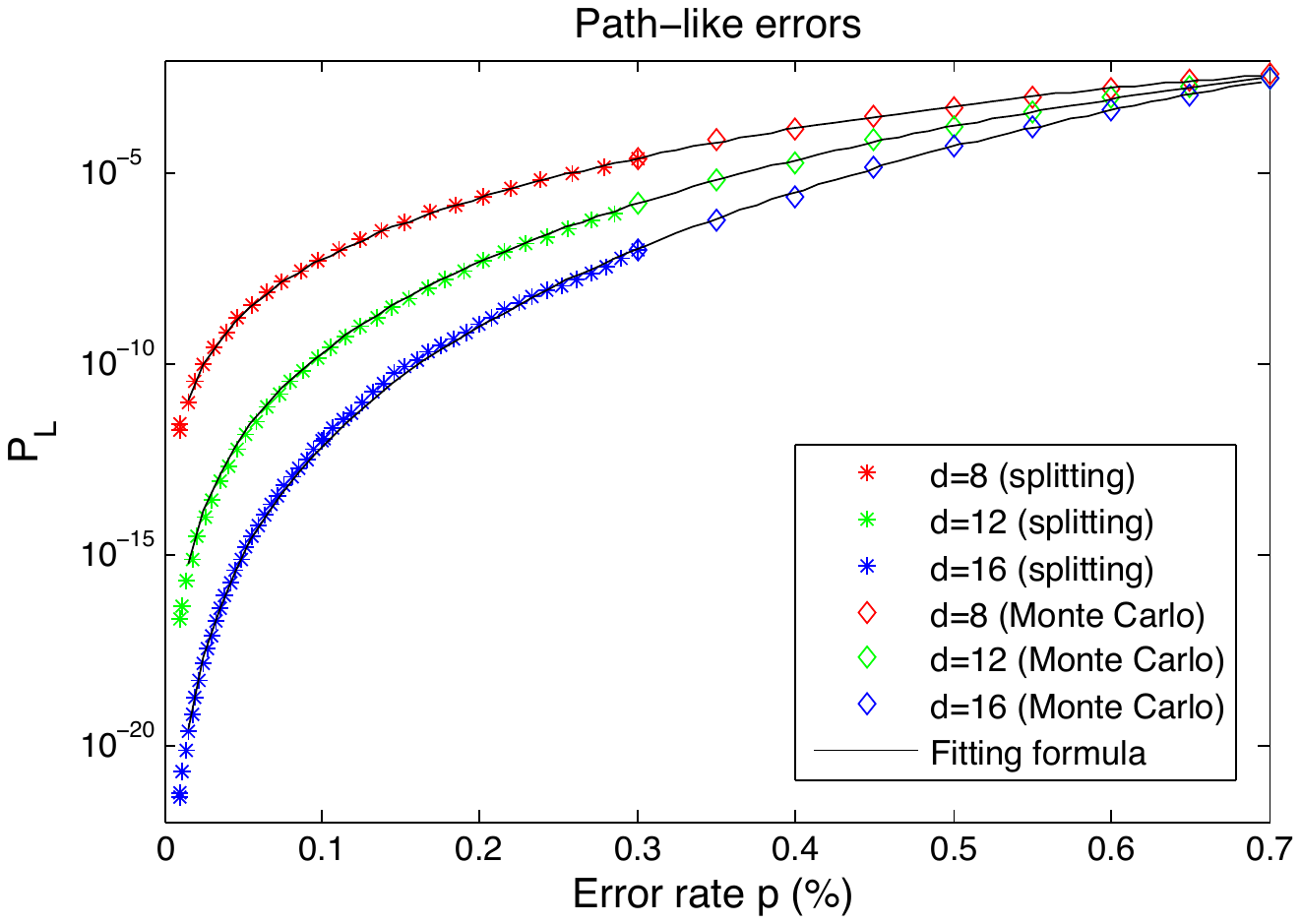}
\includegraphics[scale=0.6]{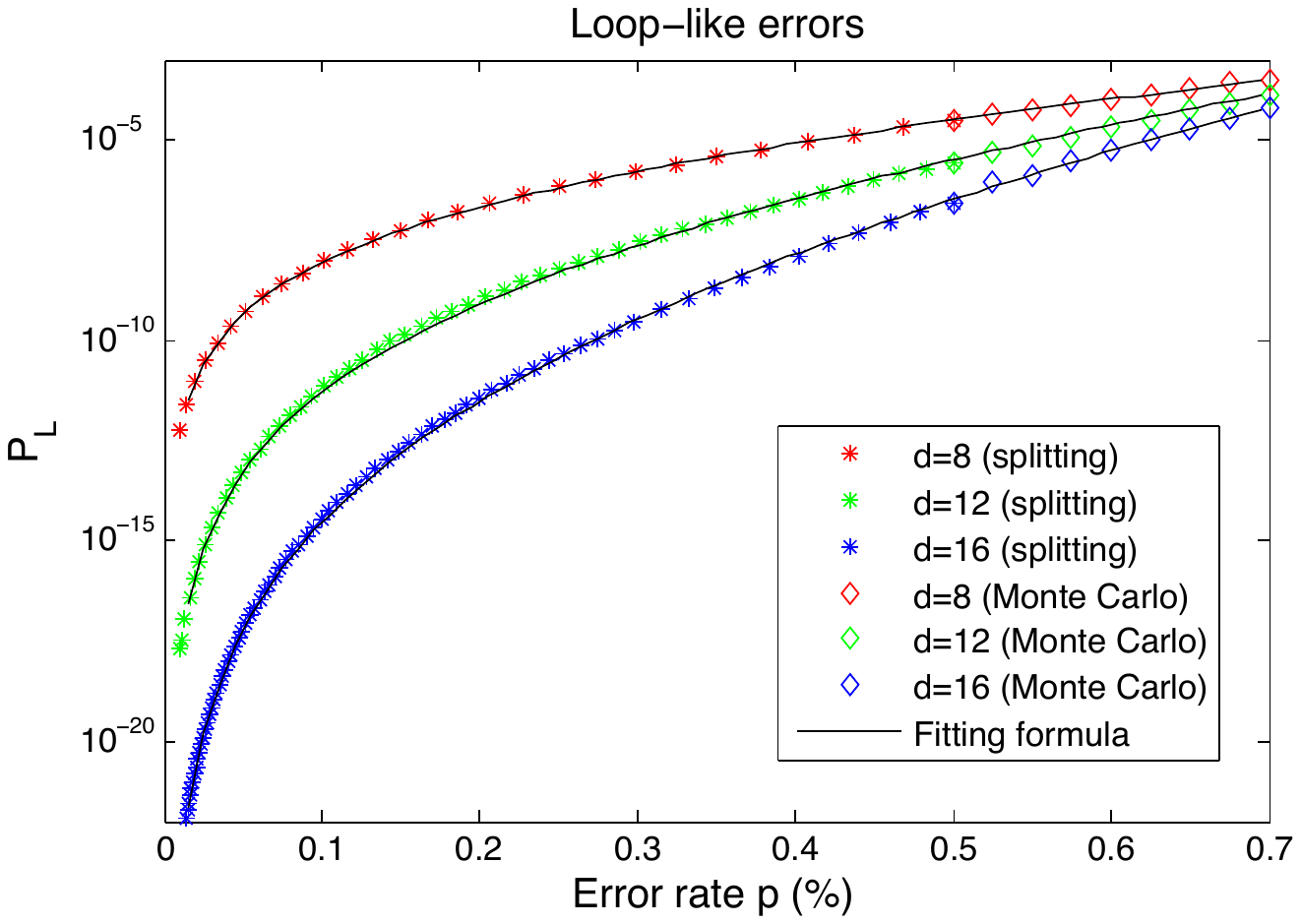}
\caption{\label{fig:234}
(Color online) {\em Noisy syndrome readout.}
Independent depolarizing errors occur with probability $p$ on each
qubit  preparation, measurement, and CNOT gate.
Logical error probability per time step $P_L$ has been computed by the downward splitting method
for error rates $p\le p^*$ and by Monte Carlo method for $p\ge p^*$.
Here $p^*= 0.3\%$  for path-like errors and $p^*=0.5\%$ for loop-like errors. 
The splitting sequence starts at $p_1=p^*$.
A good agreement between the derivatives of $P_L(p)$
computed by the two methods  is observed at the junction point $p=p^*$. 
Solid  lines represent fitting
formulas Eqs.~(\ref{Fit},\ref{xy}).
Generating the data presented in Figs.~\ref{fig:flat2345},\ref{fig:234}
took roughly 20,000 CPU hours on PowerPC~450 processors.
}
\end{figure*}

\section{Rare event simulation}
\label{sec:split}

This section provides some necessary background on the acceptance ratio method
due to Bennett~\cite{bennett1976} and the splitting method.  Let $\Omega$ be a
finite set of events, $\calF\subseteq \Omega$ be a subset of failure events,
and $\pi$ be a normalized probability distribution that assigns a probability
$\pi(E)$ to any event $E\in \Omega$.  Our goal is to calculate the overall
failure probability 
\[
\pi(\calF)\equiv \sum_{E\in \calF} \pi(E).
\]
For any probability distributions $\pi_1,\pi_2,\ldots,\pi_t$ such that
$\pi_t\equiv \pi$ and such that $\pi_1(\calF)$ is known, one can use the
obvious identity
\begin{equation}
\label{SM1}
\pi(\calF)=\pi_1(\calF) \prod_{j=1}^{t-1} \frac{\pi_{j+1}(\calF)}{\pi_{j}(\calF)}.
\end{equation}
The splitting method attempts to compute $\pi(\calF)$ by evaluating
each ratio $\pi_{j+1}(\calF)/\pi_j(\calF)$ in Eq.~(\ref{SM1}) separately.
The method is applicable whenever 
the distributions $\pi_1,\ldots,\pi_t$ have the following properties:\\
(i) For any event $E\in \Omega$ the probability $\pi_j(E)$ can be  computed efficiently.\\
(ii) There exists
an efficient randomized algorithm $\calM_j$ that generates samples $E\in \calF$
drawn from the conditional probability distribution 
$\pi_j(E|\calF) \equiv \pi_j(E)/\pi_j(\calF)$.\\
(iii) The conditional distributions $\pi_j(E|\calF)$ and $\pi_{j+1}(E|\calF)$ have a non-negligible overlap,
as quantified below.

Given any function $f\, : \, \Omega\to \RR$, we shall use
the shorthand notation 
\begin{equation}
\label{SM2}
\la f\ra_j = \frac1{\pi_j(\calF)} \sum_{E\in \calF} \pi_j(E) f(E)
\equiv \sum_{E\in \calF} \pi_j(E|\calF) f(E).
\end{equation}
Then one can easily check that
\begin{equation}
\label{SM3}
\frac{\pi_{j+1}(\calF)}{\pi_{j}(\calF)}= C \frac{\la g(C\pi_j/\pi_{j+1})\ra_j}{\la g(C^{-1}\pi_{j+1}/\pi_j)\ra_{j+1}}
\end{equation}
for any constant $C>0$ and any function $g\, : \, \RR \to \RR$ satisfying the ``detailed balance" condition
\begin{equation}
\label{SM4}
g(x)=x^{-1}g(x^{-1}).
\end{equation}
The expectation values  needed to compute the righthand side Eq.~(\ref{SM3}) can be approximated 
by calling the algorithm $\calM_j$ to generate sufficiently many samples
$E_1,\ldots,E_N\in \calF$ drawn from the distribution $\pi_j(E|\calF)$ and using
an estimate
\begin{equation}
\label{SM5}
\la g(C^{\pm 1}\pi_j/\pi_{j\pm 1})\ra_j \approx \frac1N \sum_{\alpha=1}^N g(C^{\pm 1}\pi_j(E_\alpha)/\pi_{j\pm 1}(E_\alpha)).
\end{equation}
As was shown in~\cite{bennett1976}, for a fixed 
number of samples $N$  the statistical error in Eq.~(\ref{SM3}) 
is minimized if one chooses
\begin{equation}
\label{SM6}
g(x)=\frac1{1+x}
\end{equation}
while the constant $C$ satisfies 
\begin{equation}
\label{SM7}
 \la g(C\pi_j/\pi_{j+1})\ra_j =\la g(C^{-1}\pi_{j+1}/\pi_j)\ra_{j+1}.
\end{equation}
In practice, both sides of Eq.~(\ref{SM7}) are replaced by
their $N$-sample approximations (as defined in Eq.~(\ref{SM5}));
the resulting equation can then be solved for $C$ to obtain the optimal value.
This choice of $g(x)$ and $C$ guarantees that the ratio $\pi_{j+1}(\calF)/\pi_j(\calF)$
is estimated with a relative error $\sigma_j$, where
\begin{equation}
\label{SM8}
\sigma_j^2 = \frac2N \left[ \left( \sum_{E\in \calF} \frac{2\pi_j(E|\calF) \pi_{j+1}(E|\calF)}{\pi_j(E|\calF) +
\pi_{j+1}(E|\calF)} \right)^{-1} -1\right],
\end{equation}
see Ref.~\cite{bennett1976}.
Simple algebra shows that 
\begin{equation}
\label{SM9}
\sigma_j^2\le \frac2{N(1-\| \pi_j(\cdot |\calF)- \pi_{j+1}(\cdot|\calF)\|_1)},
\end{equation}
where $\|p-q\|_1\equiv (1/2)\sum_i |p_i-q_i|$ is the total variation distance.
In particular, if $\| \pi_j(\cdot |\calF)- \pi_{j+1}(\cdot|\calF)\|_1\le 1-\delta$ for
all $j=1,\ldots,t-1$ then the overall failure probability $\pi(\calF)$ is estimated
with a relative error 
\begin{equation}
\label{SM9a}
\sigma\sim \frac{t}{\sqrt{\delta N} }.
\end{equation}
(In fact, since different terms in Eq.~(\ref{SM1}) are computed independently, one may expect that
the statistical errors accumulate in a random walk fashion).

\section{Surface codes}
\label{sec:SC}
 
We consider $n$ physical qubits located on the edges of a square lattice
$\Sigma$ with open boundary conditions, possibly containing one or several
defects, see Fig.~\ref{fig:flat} for an example.  A defect is defined as a
square block of plaquettes removed from the lattice.  The sets of sites, edges,
and plaquettes of the lattice are denoted $\Sigma_0$, $\Sigma_1$, and
$\Sigma_2$ respectively.  Given a subset of qubits $E\subseteq \Sigma_1$, let
$X(E)=\prod_{e\in E} X_e$ and $Z(E)=\prod_{e\in E} Z_e$, where $X_e$ and $Z_e$
are the Pauli  operators $\sigma^x$ and $\sigma^z$ acting on the qubit $e$.
For any site $u\in \Sigma_0$ and any
plaquette $f\in \Sigma_2$ let $A_u\subseteq \Sigma_1$ be the set of edges
incident to $u$ and $B_f\subseteq \Sigma_1$ be the set of edges lying on the
boundary of $f$.  Pauli operators $X(A_u)$ and $Z(B_f)$ are called {\em
stabilizers} of the surface code.  The stabilizers pairwise commute and have a
common invariant subspace 
\[
\calL=\{\psi \in (\CC^2)^{\otimes n} \, : \, X(A_u)\psi=Z(B_f)\psi =\psi \; \;
\forall u,f\}.
\]
The subspace $\calL$ is used to encode logical qubits.  It is well-known that
$\dim{(\calL)}=2^{b_1}$, where $b_1$ is the first Betti number of the lattice.
For open boundary conditions $b_1$ coincides with the number of defects.  Below
we only consider single and double defect geometries, $b_1=1,2$. Given a Pauli
error $P$, the syndrome of $P$ is defined as the set of all stabilizers
anti-commuting with $P$. 

To describe the relationship between single-qubit
errors and the corresponding syndromes it will be convenient to introduce
a {\em  decoding graph} $\calG=(\calV,\calE)$. There will be one decoding graph
for each type of Pauli errors (bit-flip and phase-flip errors).  Given a subset of edges $E\subseteq \calE$,
let $\partial E\subseteq \calV$ be the set of vertices $u\in V$ having odd number
of incident edges from $E$.

Let us start with phase  errors.  In this case, the decoding graph
$\calG=(\calV,\calE)$ is identical to the physical lattice, that is,
$\calV=\Sigma_0$ and $\calE=\Sigma_1$. By construction, a phase-flip error on
any edge $e$ of $\calG$ creates a pair of syndromes at the two end-points of
$e$.  More generally, given a subset of edges $E\subseteq \calE$, the syndrome
of the Pauli error $Z(E)$ coincides with the set $\partial E$.

Let us now consider bit-flip errors. In this case the decoding graph
$\calG=(\calV,\calE)$ loosely coincides with the dual of the physical lattice.
More precisely, let $T\subseteq \Sigma_1$ be the set of all qubits lying on the
boundary of the lattice (either the boundary of some defect or the external
boundary).  We choose $\calV=\Sigma_2\cup T$ and $\calE=\Sigma_1$.  A bit-flip
error on any edge $e\in \Sigma_1\backslash T$ creates a pair of syndromes at
the two plaquettes  $f,f'$ adjacent to $e$.  In the decoding graph $f,f'$ are
the two end-points of $e$.  A bit-flip error on a boundary edge $e\in T$, however,
creates a syndrome only at one plaquette $f$ adjacent to $e$. On the decoding
graph $e$ is a ``hanging edge'' connecting vertex $f$ with a degree-$1$ vertex
labeled by $e$ itself.  More generally, given a subset of edges $E\subseteq
\calE$, the syndrome of a Pauli error $X(E)$ is $(\partial E)\backslash T$.  To
deal with both types of errors on the same basis, we set $T=\emptyset$ in
the case of phase-flip errors. 

\begin{figure}[htb]
\centerline{\includegraphics[height=6cm]{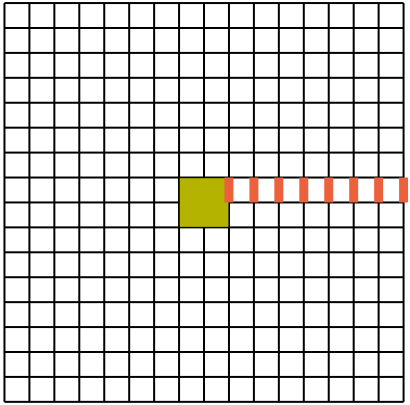}}
\centerline{\includegraphics[height=6cm]{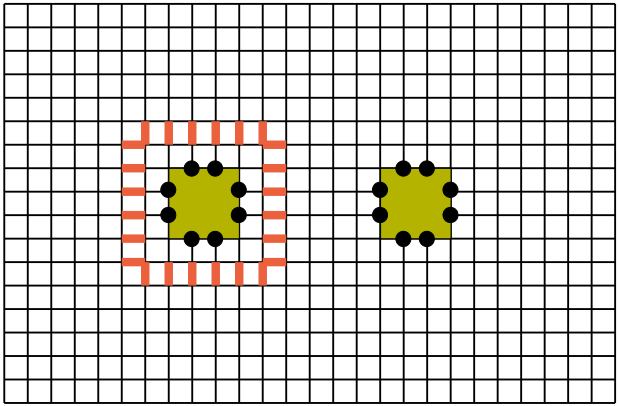}}
\caption{(Color online)
Examples of the decoding graph $\calG=(\calV,\calE)$ used in the simulations of
loop-like errors (top) and path-like errors (bottom) for $r=2$.
Logical chains $\Gamma$ are highlighted in red. 
In the case of path-like errors the graph contains 
``hanging edges" $(u,v)\in \calE$ such that $v\in T$ is a degree-$1$ vertex
that carries no syndrome. To avoid clutter, a hanging edge $(u,v)$
 is represented by a solid circle centered at $u$ (the degree-$1$ vertex $v$ is not shown).
The geometry depends on three parameters $r,s,b$, where
$r$ is the linear size of the defects, $s$ is the separation between the defects,
and $b$ is the length of the buffer zone separating the defects
and the external boundary of the lattice.
The simulations were performed for $s=b=4r$.
 This guarantees that the  code distance is $d=4r$.
\label{fig:decoding}}
\end{figure}

Let $\psi\in \calL$ be a logical state and $P$ be an unknown Pauli error.  We
shall deal with phase-flip and bit-flip errors independently, so without loss
of generality we can assume that
$P=Z(E)$ or $P=X(E)$ for some subset of edges $E\in \calE$ in the
corresponding decoding graph $\calG=(\calV,\calE)$.  Following
Ref.~\cite{Dennis01} we shall refer to the subset $E$ as an {\em error chain}.
Error correction is a two-stage process that consists of a syndrome readout
followed by a recovery step.  A syndrome readout takes as input the corrupted
state $P\, \psi$ and performs a non-destructive eigenvalue measurement of each
stabilizer. It determines the syndrome  $S=\partial E\backslash T$.  We shall
first consider the case when the syndrome readout circuit contains no errors.
Noisy syndrome readout is discussed in Section~\ref{sec:noisy}.  A recovery
step is determined by a decoding algorithm that takes as input the measured
syndrome $S\subseteq \calV\backslash T$ and returns a {\em recovery chain}
$R\subseteq \calE$.  The corrupted state $P\psi$ is then acted upon by a
recovery operator $P'=Z(R)$ for phase-flip correction
or $P'=X(R)$ for bit-flip correction.  Below we shall only consider
decoders that return any corrupted state to the logical subspace $\calL$.
Equivalently, the recovery chain and the error chain must have the same
syndrome: $\partial R\backslash T=\partial E\backslash T$.  This condition
guarantees that $P'P$ commutes with all stabilizers and thus $P'P\psi\in \calL$
for any $\psi\in \calL$. 
Error correction is successful if the restriction of $P'P$ onto $\calL$ is the identity operator. 
Otherwise, error correction results in a logical error. 
Below we only consider encodings of a single logical qubit with logical
Pauli operators $\overline{X}=X(\Gamma^X)$, $\overline{Z}=Z(\Gamma^Z)$.
Here $\Gamma^Z$ is a closed loop on the primal lattice $\Sigma$ encircling
the (left-most) defect and $\Gamma^X$ is a path on the lattice dual to $\Sigma$
connecting the defect with the external boundary (for the double-defect geometry
$\Gamma^X$ connects the two defects with each other). One can easily check
that $\overline{X}$, $\overline{Z}$ commute with all stabilizers and anti-commute with each other. 
The decoding graph will be equipped with a {\em logical chain } $\Gamma\subseteq \calE=\Sigma_1$
such that $\Gamma=\Gamma^X$ for phase-flip errors and $\Gamma=\Gamma^Z$ for bit-flip errors. 
Error correction is  successful  iff $P'P$ commutes with the logical operators $\overline{X}$, $\overline{Z}$.
Equivalently, $R\oplus E$ must have even overlap with $\Gamma$. 
 The decoding graphs used to generated the
data shown on Fig.~\ref{fig:flat2345} for $r=2$ and the corresponding logical chains
$\Gamma$ are shown on Fig.~\ref{fig:decoding}.

\section{Decoding algorithms}
\label{sec:EC}

Here we discuss algorithms for choosing a recovery chain.
This material is mostly based
on Ref.~\cite{Wang11}.

Let $\calG=(\calV,\calE)$ be the decoding graph defined above.
We assume that every edge $e\in \calE$ has some specified error rate $0\le p(e)\le 1/2$.
We shall only consider noise models such that errors on different edges of $\calG$
are independent. 
An error chain $E\subseteq \calE$ then appears with a probability
\begin{equation}
\label{probE}
\pi(E)=\prod_{e\in E} p(e)\prod_{e\in \calE\backslash E} (1-p(e)).
\end{equation}
Introduce edge weights $\phi(e)\ge 0$ such that 
\[
\exp{[-\phi(e)]}=\frac{p(e)}{1-p(e)}.
\]
Then $\pi(E)=c\cdot \exp{[-\phi(E)]}$, where
\begin{equation}
\label{phi}
\phi(E)=\sum_{e\in E} \phi(e),
\end{equation}
and $c$ is a constant coefficient independent of $E$.
We shall refer to $\phi(E)$ as a {\em weight} of $E$. 
Following Refs.~\cite{Dennis01,Wang11} we shall choose a recovery
chain $R$ as the most likely error chain consistent with the observed syndrome $S$. 
Equivalently, a candidate recovery chain $R$ must obey $(\partial R)\backslash T=S$ and 
$\phi(R)\le \phi(R')$ for any chain $R'\subseteq \calE$ satisfying $(\partial R')\backslash T=S$.
Thus decoding amounts to solving one of the following problems.
\begin{problem}
Given a graph $\calG=(\calV,\calE)$ with non-negative edge weights $\phi(e)$
and a subset of vertices $S\subseteq \calV$. Find a
minimum weight chain $E\subseteq \calE$ satisfying $\partial E=S$.
\end{problem}
\begin{problem}
Given a graph $\calG=(\calV,\calE)$ with non-negative edge weights $\phi(e)$
and disjoint sets $S,T\subseteq \calV$. Find a 
minimum weight chain $E\subseteq \calE$ satisfying $(\partial E)\backslash T=S$.
\end{problem}
In combinatorial optimization, Problem~1 is known under the name
minimum weight $T$-join (in our case $T\equiv S$), see Ref.~\cite{KorteBook}.
It can be solved in time $O(|\calV|^3)$ 
for any graph $\calG$ and any real edge weights by 
a reduction to the minimum weight perfect matching problem
and solving the latter using Edmonds's blossom 
algorithm~\cite{edmonds1965paths} or its subsequent improvements,
see Ref.~\cite{KorteBook}. Our implementation of the decoder
utilized the Blossom~V library due to Kolmogorov~\cite{kolmogorov2009blossom}.

Let us now show that Problem~2 is equivalent to the maximum weight matching
problem.  Indeed, for any pair of vertices $u,v\in \calV \setminus T$ let
$D_\phi(u,v)$ be the weighted distance between $u$ and $v$, that is, the
minimum weight of a path in $\calG$ that connects $u$ and $v$.  Likewise, let
$D_\phi(u,T)$ be the weighted distance between $u$ and the subset $T$.  One can
easily check that any minimum weight chain $E\subseteq \calE$ satisfying
$(\partial E)\backslash T=S$ consists of disjoint minimum weight paths that
connect pairs of vertices in $S$ and, possibly, 
minimum weight paths
that connect a vertex in $S$ with a vertex in $T$.  Hence Problem~2 is
equivalent to finding a (non-perfect) matching $M$ of vertices in the complete
graph $K_{|S|}$ with edge weights $D_\phi(u,v)$ that minimizes the objective function
\[
f(M)=\sum_{\mathrm{matched}} \; \; D_\phi(u,v) +  
\sum_{\mathrm{unmatched }} D_\phi(u,T).
\]
Here the first sum runs over all pairs of vertces $u,v\in S$
which are matched to each other in $M$, while the second sum runs over
 unmatched vertices $u\in S$. 
To minimize $f(M)$ define modified edge weights
\begin{equation}
\label{eta}
\eta(u,v)=D_\phi(u,T)+ D_\phi(v,T)-D_\phi(u,v).
\end{equation}
The objective function $f(M)$ can now be rewritten as
\[
f(M)=c-\sum_{(u,v)\in M} \eta(u,v), 
\]
where $c=\sum_{u\in S} D_\phi(u,T)$ is a constant that does not depend on the choice of $M$.
Thus Problem~2 is reduced to finding a maximum weight matching $M$
in the complete graph $K_{|S|}$ with edge weights $\eta(u,v)$.
A maximum weight
matching can be found in time  $O(|S|^3)$ using a slightly different version of Edmonds's blossom
algorithm~\cite{edmonds1965maximum}.
Our implementation of the decoder was based on 
the library LEMON~\cite{lemon} which realizes
a maximum weight matching algorithm due to Gabov~\cite{gabow1976efficient}.
It is worth pointing out that the optimal matching $M$
can be always chosen such that pairs of vertices with $\eta(u,v)\le 0$
are unmatched.  Such negative-weight edges can be safely removed from the
graph before calling the matching algorithm.
Although the reduction outlined above is elementary, to the best of our knowledge it has not
been used before in the context of error correction. 

\section{Correctability}
\label{sec:corr}

In this section we define correctable and uncorrectable error chains and prove
some technical lemmas needed for the analysis of the splitting method. 

Let $\calC_{min}(S)$ and $\calC_{min}(S,T)$ be the sets of 
minimum weight chains $E$ satisfying $\partial E=S$ and $(\partial E)\backslash T=S$
respectively, see Problems~1,2. Clearly, $\calC_{min}(S)=\calC_{min}(S,\emptyset)$.
Below we consider {\em minimum weight decoders} (MWD),
that is,  algorithms choosing a recovery chain
$R$ from the set $\calC_{min}(S,T)$  according to some specified rule.
Define the parity 
$\epsilon$ of an error chain $E\subseteq \calE$ as
\[
\epsilon(E)={|E\cap \Gamma| \pmod 2},
\]
where $\Gamma\subseteq \calE$ is the logical chain on the decoding graph.  We
will say that $E$ is odd (even) if $\epsilon(E)=1$ ($=0$).  Recall that error
correction is successful if the chosen recovery chain has the same syndrome and
the same parity as the actual error chain, see Section~\ref{sec:EC}.

Let $E\subseteq \calE$ be some fixed error chain.  Deciding whether a MWD
corrects $E$ is straightforward when all chains in $\calC_{min}(S,T)$ have the
same parity.  Indeed, in this case the correctability condition
$\epsilon(R)=\epsilon(E)$ is satisfied or not satisfied simultaneously for all
$R\in \calC_{min}(S,T)$.

In general, however, the set $\calC_{min}(S,T)$ may contain both even and odd
chains, see Fig.~\ref{fig:nasty} for a simple example.  In this case we will
say that $S$ is a {\em degenerate} syndrome.  Deciding whether an error chain
with a degenerate syndrome is corrected by a given MWD requires a detailed
knowledge of the rule used for choosing a recovery chain. 
As was pointed out by Stace and Barrett~\cite{StaceBarrett2010},
some decoding rules can reduce the logical error probability by breaking ties in favor
of chains having the largest entropy,  see Fig.~\ref{fig:nasty}.
 This demonstrates  that the logical error
probability $P_L$ is not well-defined unless the decoder's
behavior for all degenerate syndromes is specified. 

\begin{figure}[htb]
\includegraphics[scale=0.4]{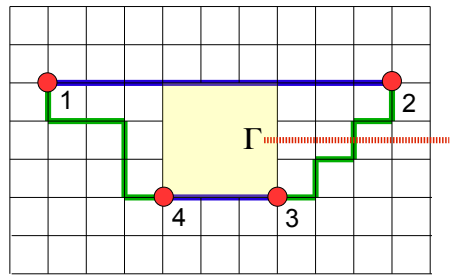}
\caption{\label{fig:nasty} 
{\em Example of a degenerate syndrome.} 
One can easily check that any error chain $E$ satisfying $\partial E=\{1,2,3,4\}$ has length at least $12$.
There is a unique even length-$12$ chain composed of paths $(1,2)$ and $(3,4)$.
There are ${6 \choose 3}^2=400$ odd length-$12$ chains composed of paths $(1,4)$ and $(2,3)$.
If the actual error chain has minimum length (which is true in the limit of small error rates), a decoding rule
that favors the pairing $(1,4),(2,3)$ is $400$ times more likely to correct the error.}
\end{figure}

To address this difficulty we opted to work with two different notions of
correctability --- decoder-specific and decoder-independent.  For the purposes
of numerical simulations, an error chain $E$ is called correctable iff a
recovery chain $R$ chosen by our implementation of the decoder satisfies
$\epsilon(R)=\epsilon(E)$. Otherwise, $E$ is called uncorrectable. Accordingly,
the logical error probability $P_L$ computed in our simulations applies only to
the one specific MWD implemented as described in Section~\ref{sec:EC}. From a
practical perspective, this is a natural approach, since an experimental
realization of error correction must be based upon some specific decoder.
Furthermore, we expect that $P_L$ does not vary too much for different MWDs
because degenerate syndromes are relatively rare.  The decoder-specific notion
of correctability was implicitly used in most of the previous work on the
subject. 

From the theoretical perspective, it is more natural to work with a
decoder-independent notion of correctability.
The definition given below is used in the rest of this section where we make
first steps towards a rigorous justification of the splitting method. 
\begin{dfn}
Let $E\subseteq \calE$ be an error chain with a syndrome
$S=(\partial E)\backslash T$.
We say that $E$ is correctable iff $\epsilon(E)=\epsilon(R)$
for all $R\in \calC_{min}(S,T)$.
Otherwise, $E$ is called uncorrectable.
\end{dfn} 
According to this definition, an error chain is correctable iff {\em any} MWD
corrects it. Note that this requirement is much stronger than the decoder-specific
correctability discussed above. 
Below we prove that the strong version of correctability can be tested efficiently
in certain special cases. The following lemmas are applicable to the decoding graphs
used in our simulations for noiseless syndrome readout, see Fig.~\ref{fig:decoding}.
Let us first assume that $T=\emptyset$.

\begin{lemma}
\label{lemma:1}
Suppose the decoding graph $\calG$ is planar with the maximum vertex
degree $O(1)$. Suppose that the logical chain $\Gamma$ is a path on the dual graph.
Then correctability can be tested in time $O(|\calV|^3)$.
\end{lemma}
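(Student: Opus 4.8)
**The plan is to recast correctability as a question about the parity of all minimum-weight chains with a fixed syndrome, and then to show that this parity question can be answered by a small number of minimum-weight $T$-join computations.** Recall that, by Definition~1 (with $T=\emptyset$), an error chain $E$ with syndrome $S=\partial E$ is correctable iff every $R\in\calC_{min}(S)$ has the same parity $\epsilon(R)=\epsilon(E)$; equivalently, $E$ is \emph{un}correctable iff $\calC_{min}(S)$ contains chains of both parities. So the task reduces to: given $S$, decide whether the set of minimum-weight chains with boundary $S$ is parity-homogeneous, and if so report that common parity. First I would compute one minimum-weight chain $R_0\in\calC_{min}(S)$ together with its weight $w=\phi(R_0)$; by Problem~1 this is a single minimum-weight $T$-join, solvable in $O(|\calV|^3)$ via Edmonds's blossom algorithm. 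This fixes a reference parity $\epsilon(R_0)$ and the target weight $w$.

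\textbf{The core step is to force the opposite parity and test whether it can still be achieved at the minimum weight $w$.} The key observation is that $\epsilon(R)=|R\cap\Gamma|\bmod 2$, and because $\Gamma$ is a path on the dual graph, its edges form a cut in $\calG$: crossing $\Gamma$ an odd number of times is a homological/parity constraint. Concretely, I would look for the minimum-weight chain whose boundary is $S$ \emph{and} whose overlap with $\Gamma$ has parity $1-\epsilon(R_0)$. Let $w'$ be the weight of this constrained optimum. Then $E$ is uncorrectable iff $w'=w$, i.e.\ iff the opposite parity is also realizable at minimum weight; if $w'>w$ (or no opposite-parity chain exists), then $E$ is correctable with common parity $\epsilon(R_0)$. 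Everything hinges on computing $w'$ efficiently.

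\textbf{I expect the main obstacle to be encoding the parity constraint $|R\cap\Gamma|\equiv 1-\epsilon(R_0)\pmod 2$ as another minimum-weight $T$-join on an auxiliary graph.} The natural device is a doubled (two-sheeted) cover of $\calG$ branched along the cut defined by $\Gamma$: lift $\calG$ to a graph $\tilde\calG$ with two copies of each vertex, where traversing an edge of $\Gamma$ switches sheets and traversing any other edge stays on the same sheet. A chain in $\calG$ crosses $\Gamma$ an odd number of times iff its lift connects the two sheets appropriately, so the parity constraint on $\Gamma$-overlap becomes a constraint on the lifted boundary set $\tilde S$ in $\tilde\calG$. Then the constrained minimum weight $w'$ is just the minimum-weight $T$-join in $\tilde\calG$ for the lifted boundary, again computable in $O(|\tilde\calV|^3)=O(|\calV|^3)$ since doubling only multiplies the vertex count by a constant. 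Verifying that the lift preserves planarity and bounded degree — needed only to keep the matching reduction well-behaved — and that the sheet-switching rule correctly tracks $|R\cap\Gamma|\bmod 2$ is the delicate bookkeeping; once the cover is set up correctly, the complexity bound follows immediately from the two $T$-join computations and the reduction in Problem~1.

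\textbf{To finish, I would assemble the test:} compute $w=\phi(R_0)$ from a $T$-join in $\calG$, compute $w'$ from a $T$-join in the doubled cover $\tilde\calG$, and declare $E$ uncorrectable iff $w'=w$. Both computations run in $O(|\calV|^3)$, so the whole correctability test runs in $O(|\calV|^3)$, as claimed. The parity $\epsilon(E)$ of the actual error chain is read off directly from $|E\cap\Gamma|$, and comparing it against $\epsilon(R_0)$ together with the $w'=w$ verdict settles both whether $E$ is correctable in the decoder-independent sense and, when it is, whether it is actually corrected.
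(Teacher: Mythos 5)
Your outer logic matches the paper's: compute $R_0\in\calC_{min}(S)$, compare $\epsilon(E)$ with $\epsilon(R_0)$, and then decide whether the opposite parity is achievable at the minimum weight $w=\phi(R_0)$ (your $w'=w$ test is the paper's $\delta=0$ test). The gap is in the core step, where you assert that $w'$ ``is just the minimum-weight $T$-join in $\tilde\calG$ for the lifted boundary.'' There is no single lifted boundary. A chain $R$ with $\partial R=S$ decomposes into paths pairing up the vertices of $S$ (plus cycles, which you also gloss over; these happen to be harmless here because with strictly positive weights a minimum-weight chain is cycle-free), and a lift of $R$ has boundary $\{s^{a_s}\}_{s\in S}$, where the sheet assignment $a$ is determined by the pairing and by the crossing parity of each individual path. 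The constraint $|R\cap\Gamma|\equiv c \pmod 2$ only fixes the total parity $\bigoplus_{s\in S}a_s=c$; it does not fix $a$, and there are $2^{|S|-1}$ admissible assignments. Your reduction is correct for $|S|=2$, but already for $S=\{s_1,s_2,s_3,s_4\}$ and $c=1$ the boundary $\{s_1^0,s_2^0,s_3^0,s_4^1\}$ only captures chains in which the path containing $s_4$ is the odd one; if the cheap way to pick up odd crossing parity is a path from $s_1$ to $s_2$ that winds around the defect, no $T$-join with that lifted boundary sees it, your computed $w'$ strictly overestimates the true constrained minimum, and the test can output ``correctable'' for an uncorrectable chain. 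What you actually need is $\min_{a:\,\oplus a=c}$ of a $T$-join, i.e.\ a parity-constrained minimum-weight perfect matching (each pair $(s,t)$ has an even flavor of weight $d(s^0,t^0)$ and an odd flavor of weight $d(s^0,t^1)$, with a global parity constraint on the number of odd-flavor pairs). That is not a single blossom call, and you give no algorithm for it; note that its close relative, the exact matching problem, is not even known to be in P.

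This global parity constraint is exactly the difficulty the paper's proof is built to absorb, and it is where the planarity hypothesis does real work --- not the ``bookkeeping'' role you assign to it (a $T$-join needs no planarity at all, which should have been a warning sign). The paper writes any competitor as $R_0\oplus C$ with $C$ a cycle, so correctability reduces to minimizing the signed weight $w(C)$ over cycles with odd overlap with $\Gamma$. By planar duality, cycles are precisely the domain walls $C(\sigma)$ of Ising spin configurations $\sigma$ on the faces, and the odd-overlap condition becomes the single constraint $\sigma_{f'}\sigma_{f''}=-1$ at the two endpoints of the dual path $\Gamma$. Enforcing this with a large coupling on an added edge $(f',f'')$ gives an Ising Hamiltonian on a genus-$1$ graph whose ground-state energy equals $\delta$, computable in $O(|\calV|^3)$ by the Pfaffian/partition-function algorithms cited in the paper; the exponential sum over your sheet assignments is exactly what that machinery handles. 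To repair your argument you would either have to supply a polynomial-time algorithm for the parity-constrained matching problem, or follow the paper and exploit the planar/low-genus structure.
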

\begin{proof}
Let $E$ be any error chain and $S=\partial E$ be its syndrome. 
Choose any recovery chain $R_0\in \calC_{min}(S)$.
It can be constructed in time $O(|\calV|^3)$, see Section~\ref{sec:EC}. 
If $R_0\oplus E$ has odd parity then $E$ is uncorrectable and we are done.
Below we assume that $R_0\oplus E$ has even parity.
Then $E$ is correctable iff all chains in $\calC_{min}(S)$ have the same parity.

A chain $C\subseteq \calE$ is called a cycle iff $\partial C=\emptyset$.
Let $\calC_{odd}$ be the set of all cycles with odd parity. 
Define a quantity 
\[
\delta=\min_{C\in \calC_{odd}} \; \phi(R_0\oplus C) - \phi(R_0).
\]
The minimality of $R_0$ implies that $\delta\ge 0$. 
We claim that $E$ is uncorrectable iff $\delta=0$. Indeed,
suppose $E$ is uncorrectable. Then there must exist a chain $R\in \calC_{min}(S)$
such that $R_0$ and $R$ have different parity. Therefore $C=R_0\oplus R$ is an odd cycle and 
$\phi(R_0\oplus C)=\phi(R)=\phi(R_0)$, that is, $\delta=0$. 
Conversely, assume that $\delta=0$. Then $\phi(R_0\oplus C)=\phi(R_0)$
for some odd cycle $C$ and thus $R=R_0\oplus C\in \calC_{min}(S)$
has parity different from $R_0$. Therefore $E$ is uncorrectable.

Thus, it suffices to show that the quantity $\delta$ can be computed in time $O(|\calV|^3)$.
Define new weights
\[
w(e)=\left\{ \begin{array}{rcl}
\phi(e) &\mbox{if} & e\notin R_0, \\
-\phi(e) & \mbox{if} & e\in R_0.\\
\end{array}
\right.
\]
Then $\delta$ coincides with the minimum $w$-weight
of an odd cycle, $\delta=\min_{C\in \calC_{odd}} w(C)$.
Let us show that $\delta$ can be expressed as the ground state energy
of the Ising model defined on a genus-$1$ graph (i.e. a graph embeddable into
a torus). This ground state energy can be computed in time $O(|\calV|^3)$ using algorithms
of Refs.~\cite{zecchina2001,valiant2002,cimasoni2007,bravyi2009contraction}.
Indeed, since $\calG$ is planar, it has a well-defined set of faces $\calV^*$ such that
 every edge $e\in \calE$ has exactly two adjacent faces $f,g\in \calV^*$.
 We will write $e=(f,g)$.
 For each face $f\in \calV^*$ introduce Ising spin $\sigma_f\in \{+1,-1\}$.
Given a spin configuration $\sigma=\{\sigma_f\}_{f\in \calV^*}$, define a chain $C(\sigma)\subseteq \calE$ such that
$C(\sigma)$ includes all edges $e=(f,g)$ with $\sigma_f\sigma_g=-1$.
The standard relationship between cycles in a planar graph and cuts in the dual graph
implies that $C(\sigma)$ is a cycle for any choice of $\sigma$ and that any cycle $C\subseteq \calE$
has form $C(\sigma)$ for some spin configuration $\sigma$.

By assumption, $\Gamma$ is a path on the dual graph connecting some pair of faces $f',f''\in \calV^*$.
Simple algebra shows that a cycle $C(\sigma)$ has odd parity iff
$\sigma_{f'}\sigma_{f''}=-1$. We conclude that $\delta=\min_{\sigma} H(\sigma)$,
where $H(\sigma)$ is the Ising-like Hamiltonian on the dual graph defined as
\[
H(\sigma)= J(1+\sigma_{f'} \sigma_{f''}) +\sum_{e=(f,g)\in \calE} w(e)(1-\sigma_f \sigma_g)/2
\] 
Here $J\gg 1$ is chosen large enough to guarantee that $\sigma_{f'}\sigma_{f''}=-1$ for any ground
state $\sigma$. To make the interaction $\sigma_{f'} \sigma_{f''}$ compatible with the graph
structure, the edge $(f',f'')$ must be added to the dual graph $\calG^*$.
Although the resulting graph is not planar, it is embeddable into a torus.
Indeed, one can first embed the primal graph $\calG$ into a sphere, create a pair of holes 
in the faces $f'$, $f''$, connect the two holes by a tube,
and finally draw the edge $(f',f'')$ on the tube's surface.
The algorithms of Refs.~\cite{zecchina2001,valiant2002,cimasoni2007,bravyi2009contraction}
enable exact computation of the partition function $Z(\beta)=\sum_\sigma \exp{[-\beta H(\sigma)]}$ in
time $O(|\calV|^3)$ for
any value of the inverse temperature $\beta$ 
assuming that the graph of spin-spin interactions has constant genus
(see, for instance, Theorem~1 of Ref.~\cite{bravyi2009contraction}).
The two cases $\delta=0$ and $\delta>0$ correspond to 
$\lim_{\beta\to \infty} Z(\beta)=1$ or $\lim_{\beta\to \infty} Z(\beta)=0$
which can be distinguished by computing $Z(\beta)$ 
for large enough $\beta$.
\end{proof}

Suppose now that $T$ is non-empty.
Recall that a subset of edges $C\subseteq \calE$ is called a {\em cut}
iff one can partition vertices of the graph into two disjoint sets,
$\calV=\calV_0\cup \calV_1$,  such that 
$C$ coincides with the set of edges connecting $\calV_0$ and $\calV_1$.
\begin{lemma}
Suppose the decoding graph $\calG$ is arbitrary and the logical chain 
$\Gamma$ is a cut of $\calG$.
 Then  correctability can be tested in time $O(|\calV|^3)$.
\end{lemma}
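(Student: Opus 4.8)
The plan is to mirror the proof of Lemma~\ref{lemma:1}: fix one minimum-weight recovery chain $R_0\in\calC_{min}(S,T)$ (constructible in time $O(|\calV|^3)$ by the matching reduction of Section~\ref{sec:EC}), reduce correctability to the sign of a suitable minimum-weight ``odd cycle'' quantity $\delta$, and then show that $\delta$ can be computed fast. Introduce the signed weights $w(e)=\phi(e)$ for $e\notin R_0$ and $w(e)=-\phi(e)$ for $e\in R_0$, so that $w(D)=\phi(R_0\oplus D)-\phi(R_0)$ for every chain $D$. Since any $D$ with $\partial D\subseteq T$ makes $R_0\oplus D$ a legal competitor for $R_0$ (it has syndrome $(\partial(R_0\oplus D))\setminus T=S$), minimality of $R_0$ gives $w(D)\ge 0$ for every such ``relative cycle''; in particular $w$ has no negative cycles at all. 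Writing $\calC_{min}(S,T)=\{R_0\oplus D:\ \partial D\subseteq T,\ w(D)=0\}$ and using additivity of the parity $\epsilon$, one sees that $E$ is uncorrectable exactly when either $\epsilon(R_0)\ne\epsilon(E)$ (decided immediately), or $\epsilon(R_0)=\epsilon(E)$ and there is an odd relative cycle of zero weight, i.e. $\delta:=\min\{w(D):\ \partial D\subseteq T,\ \epsilon(D)=1\}=0$.

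The place where the hypothesis enters --- and where this proof departs from Lemma~\ref{lemma:1} --- is in converting the homological parity $\epsilon(D)=1$ into a purely combinatorial endpoint condition. Write the cut as $\calV=\calV_0\cup\calV_1$ with $\Gamma$ the set of crossing edges. Counting crossings through the $\calV_0$-side gives $\epsilon(D)=|D\cap\Gamma|\equiv\sum_{v\in\calV_0}\deg_D(v)\equiv|\partial D\cap\calV_0|\pmod 2$, so a relative cycle is odd iff an odd number of its odd-degree vertices lie on the $\calV_0$ side of the cut. Because $\Gamma$ is a cut (a coboundary) rather than a dual loop, the parity is a local functional of the boundary, which is precisely what lets us avoid the genus-$1$ Ising computation of Lemma~\ref{lemma:1} and handle an arbitrary, possibly non-planar, graph.

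It remains to compute $\delta$. Decompose an arbitrary odd relative cycle $D$ into edge-disjoint simple paths joining pairs of $\partial D$-vertices together with simple cycles. Each cycle and each single path is itself a relative cycle (its boundary lies in $T$), hence has non-negative $w$-weight; moreover cycles are even by the previous paragraph, so oddness of $D$ forces an odd --- hence nonzero --- number of ``cross'' paths whose two endpoints lie on opposite sides of the cut. Dropping all components but one cheapest cross path can only lower the weight, so $\delta=\min_{u\in T\cap\calV_0,\ v\in T\cap\calV_1}D_w(u,v)$, where $D_w$ is the shortest $w$-weight path distance; if no such pair exists then $\delta=+\infty$ and $E$ is correctable. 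Since $w$ has no negative cycles, the all-pairs distances $D_w$ are computable in time $O(|\calV|^3)$ (e.g. by Floyd--Warshall), and the final minimization over opposite-side terminal pairs costs $O(|\calV|^2)$; together with the $O(|\calV|^3)$ construction of $R_0$ this proves the claim.

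The main obstacle is the third step above: a priori $\delta$ looks like a minimum-weight odd $T$-join with a global $\ZZ_2$ parity constraint, and enforcing such a constraint inside a matching computation is delicate. The resolution is the non-negativity of all relative cycles (a consequence of the minimality of $R_0$): it forbids negative same-side detours, so the optimal odd relative cycle collapses to a single shortest cross path and no parity-constrained matching is needed. I would therefore be careful to establish the no-negative-cycle property first, as both the shortest-path computation and the collapse of $\delta$ to a single cross path depend on it.
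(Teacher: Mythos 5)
Your proof is correct, but it takes a genuinely different route from the paper's. The paper, after the identical first step (compute $R_0\in\calC_{min}(S,T)$ in time $O(|\calV|^3)$ and declare $E$ uncorrectable if $\epsilon(R_0)\ne\epsilon(E)$), does not introduce signed weights at all. It first disposes of $\Gamma$-edges with both endpoints in $T$ (positive-weight ones can be deleted from $\Gamma$; a zero-weight one immediately makes $E$ uncorrectable), and then splits into two cases: if $T\cap\calV_0=\emptyset$, every chain consistent with $S$ has the same parity and $E$ is correctable; otherwise it collapses all of $T\cap\calV_0$ into a single vertex $t_0$ and solves two instances of Problem~2 on the collapsed graph, with targets $S'=S$ and $S''=S\cup t_0$. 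By the same cut identity you use, $\epsilon(R)\equiv|(\partial R)\cap\calV_0|\pmod 2$, these two minimizations compute the least weight of an even and of an odd candidate recovery chain, so $E$ is uncorrectable iff $w_{min}'=w_{min}''$. You instead stay inside the signed-weight, odd-cycle framework of Lemma~\ref{lemma:1}, and use the cut hypothesis to collapse the minimum odd relative cycle to a single shortest cross path from $T\cap\calV_0$ to $T\cap\calV_1$, computed by Floyd--Warshall; this is legitimate because, as you correctly insist, minimality of $R_0$ rules out negative cycles, so shortest walks are realized by simple paths. Both tests run in $O(|\calV|^3)$. What the paper's route buys is black-box reuse of the decoder (matching) machinery and no shortest-path computations under negative edge weights; what your route buys is symmetry in $\calV_0,\calV_1$, no graph surgery or preprocessing (zero-weight $\Gamma$-edges inside $T$ appear automatically as zero-weight cross paths), only one call to the matching-based solver, and a sharper structural fact: a degenerate syndrome is always witnessed by a single zero-$w$-weight path crossing the cut.
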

\begin{proof}
Let $E$ be any error chain and $S=(\partial E) \setminus T$ be its syndrome.
Choose any recovery chain $R_0\in \calC_{min}(S,T)$.  It can be constructed in
time $O(|\calV|^3)$, see Section~\ref{sec:EC}.  If $R_0\oplus E$ has odd parity
then $E$ is uncorrectable and we are done.  Below we assume that $R_0\oplus E$
has even parity.  Then $E$ is correctable iff all chains in $\calC_{min}(S,T)$
have the same parity.

Without loss of generality, no edge $e$ having both end-points in $T$ belongs
to $\Gamma$. Indeed,  such an edge $e$ does not participate in the constraint
$(\partial R)\backslash T=S$.  If $e$ has positive weight $\phi(e)$, no chain
in $\calC_{min}(S,T)$ contains $e$ and we can safely remove $e$ from $\Gamma$
without changing the parity of any $R \in \calC_{min}(S,T)$.  If $e$ has zero
weight, $\calC_{min}(S,T)$ contains both odd and even chains and the error is
uncorrectable. 

By assumption, $\calV=\calV_0\cup \calV_1$, $\calV_0\cap \calV_1=\emptyset$, and
$\Gamma$ is the set of edges connecting $\calV_0$ and $\calV_1$. 
Below we shall often use the obvious fact that for any chain $R\subseteq \calE$
the overlap $|R\cap \Gamma|$ is even (odd) iff $|(\partial R)\cap \calV_0|$ is even (odd).
Consider two cases. \\
{\em Case~1:} $T\cap \calV_0=\emptyset$. Then all chains $E$
satisfying $(\partial E)\backslash T=S$  are either even or odd
depending on whether $|S\cap \calV_0|$ is even or odd.
Hence $E$ is correctable. \\
{\em Case~2:} $T\cap \calV_0\ne \emptyset$. Define a new graph
$\tilde{\calG}=(\tilde{\calV},\tilde{\calE})$ obtained from $\calG$ by collapsing
all vertices of $T\cap \calV_0$ into a single vertex $t_0$. More precisely, we replace every edge 
$(u,v)$ with $u\in T\cap \calV_0$, $v\notin T$ by an edge $(t_0,v)$
without changing its weight. We can regard $\Gamma$ as a subset of edges
of $\tilde{\calG}$. 
 Define $S'=S$,  $S''=S\cup t_0$, and $\tilde{T}=T\cap \calV_1$.
Let $w_{min}'$ and $w_{min}''$
be minimum weights of chains $R',R''\in \tilde{\calE}$ 
 satisfying $(\partial R')\backslash \tilde{T}=S'$
and  $(\partial R'')\backslash \tilde{T}=S''$ respectively. Note that $w_{min}'$ and $w_{min}''$ can be
computed in time $O(|\calV|^3)$, see Section~\ref{sec:EC}.
Furthermore, any chain $R'$ as above has even (odd) parity iff
$|S\cap\calV_0|$ is even (odd). On the other hand, any chain $R''$ as above
has even (odd) parity iff $S\cap \calV_0$ is odd (even). Hence 
$w_{min}'=w_{min}''$ iff $\calC_{min}(S,T)$ contains two chains with a different parity.
We conclude that $E$ is uncorrectable iff $w_{min}'=w_{min}''$.
\end{proof}

Our implementation of the  splitting method involves 
a Metropolis-type subroutine for sampling uncorrectable
error chains from the chosen probability distribution. 
The following two lemmas are needed to prove that the
corresponding Markov process is ergodic.
Let us first assume that $T=\emptyset$.
\begin{lemma}
Suppose the decoding graph $\calG$  a square lattice with one smooth defect
and $\Gamma$ is a path on the dual lattice that connects the defect to the external boundary,
see Fig.~\ref{fig:decoding}.
Let $E,E'$ be any uncorrectable chains. 
Then one can transform $E$ to $E'$ by adding and removing
single edges such that all intermediate chains are uncorrectable.
\end{lemma}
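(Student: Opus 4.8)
The plan is to recast correctability combinatorially and then connect any two uncorrectable chains in the hypercube whose vertices are chains and whose edges are single-edge toggles. For a syndrome $S$ and a parity $a\in\{0,1\}$ let $W(S,a)$ be the minimum weight $\phi(F)$ over chains $F$ with $\partial F=S$ and $\epsilon(F)=a$ (recall $T=\emptyset$). Any two chains with syndrome $S$ differ by a cycle, so $\calC_{min}(S)$ contains a chain of parity $a$ iff $W(S,a)=\min_b W(S,b)$. Hence the definition of correctability given above says that $E$ is uncorrectable iff $W(\partial E,\epsilon(E))\ge W(\partial E,1-\epsilon(E))$, i.e. iff $\epsilon(E)$ is \emph{not} the strict minimum-weight parity of $\partial E$. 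In particular uncorrectability depends only on the pair $(\partial E,\epsilon(E))$, so the uncorrectable chains form a union of cosets $\calU(S,a)=\{E:\partial E=S,\ \epsilon(E)=a\}$. Since edge weights are positive below threshold, $W(\emptyset,0)=0<W(\emptyset,1)$, so every odd cycle is uncorrectable; I fix one such cycle --- the logical loop $L$ encircling the defect --- as a target and aim to join every uncorrectable $E$ to $L$.

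Because a single toggle of an edge $e$ sends $\calU(S,a)$ to the adjacent coset $\calU(S\oplus\partial e,\,a\oplus[e\in\Gamma])$, it suffices to establish two facts: (I) each uncorrectable coset $\calU(S,a)$ is connected through uncorrectable chains, and (II) the uncorrectable cosets can be chained to $\calU(\emptyset,1)$, with consecutive cosets differing by one edge. Granting these, one joins $E$ to $L$ by alternately repositioning inside the current coset via (I) and stepping to the next coset via (II), eventually reaching an odd cycle and repositioning it to $L$; every chain visited is uncorrectable.

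For (I), two chains of $\calU(S,a)$ differ by an even cycle, which on the annulus is a boundary and hence a sum of lattice plaquettes, so I add the plaquettes one at a time. Adding a plaquette $f$ is realized by four toggles that create, transport, and annihilate a local pair of defects on $f$. The point is that all four intermediate chains remain uncorrectable: when $f$ is away from $S$ and from the defect and the boundary, the minimum-weight chain for the perturbed syndrome splits into a part serving $S$ and an independent part serving the local pair, and then --- using that a plaquette meets $\Gamma$ in an even number of edges, and that in the bulk the unique minimum-weight connection of two adjacent sites is the edge joining them --- a short case check yields $W(S\oplus\Delta,a')\ge W(S\oplus\Delta,1-a')$ at each step, where $\Delta$ and $a'$ record the local syndrome change and parity. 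The remaining cases, where $f$ abuts $S$, the defect, or the boundary (so the decomposition fails and weights need not be uniform), must be checked by hand.

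For (II), I reduce $S$ to $\emptyset$ by repeatedly choosing two nearby defects and annihilating them along a short path, one toggle per edge. When the global minimum-weight chains decompose across the chosen path, the step shifts $W(S,0)$ and $W(S,1)$ by equal amounts and therefore preserves the non-minimizing parity, keeping the coset uncorrectable. The main obstacle is the near-degenerate case, where the margin $W(S,a)-W(S,1-a)$ is small: there a careless annihilation can flip the minimizing parity and drop the chain into the correctable set. The plan is to route the annihilations so that the winding responsible for uncorrectability is never destroyed --- defects straddling $\Gamma$ are removed in pairs across the cut so that the parity of $\Gamma$-crossings of the cheapest chain is conserved, while the remaining defects are contracted inside the simply connected region obtained by cutting the annulus along $\Gamma$. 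Showing that such a route always exists, and controlling the non-uniform weights near the defect and the external boundary, is the hard part; I expect to use planarity together with the cut/Ising reformulation from the proof of Lemma~\ref{lemma:1}.
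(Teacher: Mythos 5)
Your combinatorial reformulation is sound: uncorrectability indeed depends only on the pair $(\partial E,\epsilon(E))$ through the two weights $W(S,0),W(S,1)$, and targeting the odd cycle around the defect is the right move. But the proposal then rests entirely on claims (I) and (II), and both are deferred at exactly the points where the lemma's difficulty lives. In (I) the cases where the plaquette abuts the syndrome, the defect, or the boundary are left "to be checked by hand"; in (II) you concede that the near-degenerate case and the existence of a safe annihilation route are "the hard part" that you "expect" to handle with planarity and the Ising reformulation. This is not a routine residue: toggling a single edge $e$ can shift $W(S,0)$ and $W(S,1)$ by $\pm\phi(e)$ in opposite directions, so when $W(S,0)=W(S,1)$ (degenerate syndromes exist, cf. Fig.~\ref{fig:nasty}) one annihilation step can flip the minimizing parity and land the walk in a correctable coset. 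Nothing in the plan rules this out, and no argument is given that a routing avoiding all such flips exists. So as it stands the proposal is a correct framing with the proof missing.

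The paper closes precisely this gap with one trick that eliminates all weight and geometry case analysis: never toggle arbitrary edges, only edges of an actual minimum-weight recovery chain $R\in\calC_{min}(\partial E)$ that witnesses uncorrectability (i.e.\ $\epsilon(R)\ne\epsilon(E)$). A one-line proposition shows that if $e=(u,v)\in R$ then $R\setminus e\in \calC_{min}(\partial E\oplus\{u,v\})$. Hence after replacing $E$ by $E\oplus e$ and $R$ by $R\setminus e$, the new $R$ is still a minimum-weight recovery and the symmetric difference $E\oplus R$ --- and with it its odd parity --- is literally unchanged, so the new chain is uncorrectable by Definition~1, with no assumptions about degeneracy, distances to the defect, or uniformity of weights. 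Iterating strips $R$ down to the empty chain and leaves $E$ as a disjoint union of loops with an odd number of odd ones; only then do plaquette moves (the one ingredient you share with the paper) enter, contracting even loops and carrying odd loops to the defect boundary, and there the syndromes have at most two vertices, the recoveries are a single edge or a pair of adjacent edges, and the same symmetric-difference argument keeps every intermediate chain uncorrectable. Replacing your step (II) by this recovery-chain peeling would complete the proof, and it also makes step (I) unnecessary as a separate statement.
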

\begin{proof}
Let $\Omega$ be the set of all edges lying on the boundary of the defect. 
Since $\Omega$ has trivial syndrome, $\partial \Omega=\emptyset$,
and odd parity, $\Omega$ is uncorrectable by Definition~1.
It suffices to prove the lemma for the special case $E'=\Omega$. 
We will need the following simple observation.
\begin{prop}
Let $S\subseteq \calV$ be a subset of vertices and 
$R\in \calC_{min}(S)$ be any chain. 
Choose any edge $e=(u,v)\in R$ and  define
$S'=S\oplus \{u,v\}$. Then 
$R\backslash e\in \calC_{min}(S')$. 
\end{prop}
\begin{proof}
Obviously, $\partial (R\backslash e)=\partial (R\oplus e)=\partial R \oplus \partial e=S'$.
Thus, $\partial(R\backslash e) = S'$ and $\phi(R\backslash e)=\phi(R)-\phi(e)$.
Suppose $R\backslash e$ is not in $\calC_{min}(S')$.
Then there exists $R'$ such that $\partial(R^\prime) = S^\prime$ and $\phi(R')<\phi(R)-\phi(e)$. 
Define $R''=R'\oplus e$. Then $\partial R'' = S$ and $\phi(R'')\le \phi(R')+\phi(e)<\phi(R)$
which contradicts the minimality of $R$. Thus $R\backslash e\in \calC_{min}(S')$. 
\end{proof}
Let $S=\partial E$ be the syndrome of $E$.
Since $E$ is uncorrectable, there must exist a recovery chain $R\in \calC_{min}(S)$
such that $E\oplus R=L_1\cup \ldots \cup L_m$ is a disjoint union of loops
and the number of odd loops among $L_1,\ldots,L_m$ is odd. 
Choose any edge $e\in R$ and define $E_1=E\oplus e$, $R_1=R\oplus e$. 
The proposition above implies that $R_1\in \calC_{min}(\partial E_1)$,
that is, $R_1$ is a minimum weight recovery chain for $E_1$.
Furthermore, since $E\oplus R=E_1\oplus R_1$, the new error chain $E_1$ is uncorrectable.
By repeating this argument one can construct a sequence of uncorrectable error chains
$E_0=E, E_1,\ldots, E_p$ such that 
$E_{i+1}$ is obtained from $E_i$ by adding (modulo two) any edge
from the recovery chain $R_i$ corresponding to $E_i$.
This process stops as soon as the recovery chain corresponding to $E_p$ 
is empty. At this step
$E_p=L_1\cup\ldots \cup L_m$ is a disjoint union of loops. 
We shall deal with the loops $L_\alpha$ one by one
such that each even loop is transformed into an empty chain (contracted)
while each odd loop is transformed into $\Omega$. 
These transformations can be implemented such that at every step 
a loop is modified by adding (modulo two) a boundary of some plaquette
which requires adding or removing at most three edges.  At each step
the syndrome 
 consists of at most two vertices and the recovery chain
consists either of a single edge or a pair of adjacent edges
lying on the boundary of some plaquette. 
Since the number of odd loops is odd, the final error chain coincides with $\Omega$.
\end{proof}
Suppose now that $T\ne \emptyset$.
\begin{lemma}
Suppose the decoding graph $\calG$ is a square lattice with two rough defects
and $\Gamma$ is a loop on the dual lattice encircling one of the defects,
see Fig.~\ref{fig:decoding}.
Let $E,E'$ be any uncorrectable chains. 
Then one can transform $E$ to $E'$ by adding and removing
single edges such that all intermediate chains are uncorrectable.
\end{lemma}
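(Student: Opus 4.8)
The plan is to mirror the proof of Lemma~3 under the formal duality that exchanges smooth defects with rough ones and loops with open chains. The canonical uncorrectable chain, playing the role of $\Omega$, will now be a single open chain $\Omega\subseteq \calE$ whose two endpoints lie in $T$ --- one on the rough boundary of the defect enclosed by $\Gamma$ (inside the cut) and one outside it (on the other defect or the external boundary). Since $(\partial \Omega)\backslash T=\emptyset$ and exactly one endpoint of $\Omega$ lies in the interior region $\calV_0$ cut off by $\Gamma$, the fact used in Lemma~2 ($|R\cap\Gamma|$ odd iff $|(\partial R)\cap\calV_0|$ odd) shows $\Omega$ is odd; as the empty chain lies in $\calC_{min}(\emptyset,T)$, $\Omega$ is uncorrectable by Definition~1. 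As in Lemma~3 it suffices to connect an arbitrary uncorrectable $E$ to this fixed $\Omega$.

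First I would establish the analog of the Proposition for $T\neq\emptyset$: if $R\in \calC_{min}(S,T)$ and $e=(u,v)\in R$, then $R\backslash e\in \calC_{min}(S',T)$ with $S'=(\partial(R\backslash e))\backslash T$. The argument is identical --- were $R\backslash e$ not minimal, adding $e$ back to a cheaper competitor would beat $R$. With this in hand, pick $R\in \calC_{min}(S,T)$ of parity opposite to $E$ (it exists because $E$ is uncorrectable) and peel the edges of $R$ one at a time, setting $E_{i+1}=E_i\oplus e$, $R_{i+1}=R_i\backslash e$ for $e\in R_i$. The combination $E_i\oplus R_i=E\oplus R$ is invariant and each $R_i$ remains a minimum-weight recovery chain for $E_i$, so $\epsilon(E_i)\oplus\epsilon(R_i)=\epsilon(E\oplus R)=1$ and every $E_i$ is uncorrectable. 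When $R$ is exhausted we reach $E_p=E\oplus R$, which has $(\partial E_p)\backslash T=\emptyset$, odd parity, and hence decomposes as a disjoint union of closed loops and open chains with all endpoints in $T$.

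It then remains to transform this normal form into $\Omega$ using single-edge moves, and here I would reuse the local toolkit of Lemma~3: deforming a chain by the boundary of a plaquette (at most three edges), sliding an endpoint from one $T$-vertex to an adjacent one along a rough boundary, and contracting a component. Each even loop and each even open chain is contracted to the empty chain; the remaining odd components --- an odd number of them, since the total parity is odd --- are pairwise merged (summing two odd components yields an even one, which is then contracted) until a single odd component survives, which is finally slid and deformed onto $\Omega$. Throughout, the plan is to keep the transient syndrome confined to at most two nearby bulk vertices far from $\Gamma$, so that the minimum-weight recovery is a short local path of even parity while the persisting odd component keeps $\epsilon(E_i)=1$, guaranteeing uncorrectability at every step.

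The hard part will be exactly this last bookkeeping: verifying that one can realize every contraction, merge, and slide by single-edge moves whose transient syndromes admit a minimum-weight recovery of parity opposite to the current chain. The interaction with $T$ is the genuinely new difficulty relative to Lemma~3 --- retracting an open chain toward the rough boundary momentarily leaves a lone syndrome vertex whose cheapest recovery runs to $T$, and one must check that this recovery does not cross $\Gamma$ (so that it stays even) and is genuinely of minimum weight. Confining all plaquette deformations well away from the cut $\Gamma$, and sliding endpoints only along the portion of the rough boundary where the cheapest recovery of the transient vertex stays on one side of $\Gamma$, should circumvent the degenerate-syndrome pitfalls illustrated in Fig.~\ref{fig:nasty}; but making these locality and parity claims precise for the specific geometry of Fig.~\ref{fig:decoding} is where the real work lies.
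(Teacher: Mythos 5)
Your proposal follows essentially the same route as the paper's own proof: fix a canonical odd chain $\Omega$ with endpoints in $T$ (the paper's path connecting the two defects), peel off a minimum-weight recovery chain edge-by-edge via the $T\neq\emptyset$ analog of the Proposition to reach a disjoint union of loops and $T$-terminated paths of odd total parity, and then reduce to $\Omega$ by plaquette transformations. The only shortcut the paper takes that you do not is to first collapse the $T$-vertices on each defect boundary into single vertices $t'$, $t''$ (as in the proof of Lemma~2), which streamlines the normal form and absorbs your explicit endpoint-sliding moves; beyond that, both arguments leave the final plaquette-move bookkeeping at the same level of detail, so your acknowledged "hard part" is no less rigorous than the published proof.
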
 
\begin{proof}
As explained in the proof of Lemma~2, one can collapse all vertices $u\in T$ on the boundary
of each defect into a single vertex. Thus we can assume that $T=\{t',t''\}$.
Let $\Omega$ be a path connecting $t'$ and $t''$. 
Clearly, $\Omega$ is uncorrectable. 
It suffices to prove the lemma for the case $E'=\Omega$. 
Repeating the same steps as in the proof of Lemma~3
one can transform $E$ to a disjoint union of loops and paths connecting $t'$ and $t''$,
such that the number of paths is odd. Applying a sequence of plaquette transformations
as in the proof of Lemma~3 one can contract each loop and transform each path into $\Omega$.
\end{proof}

\section{Implementation of the splitting method}
\label{sec:implement}

Here we specialize the  splitting method described in Section~\ref{sec:split} to
the minimum weight decoding problem.
Let $\calG=(\calV,\calE)$ be the decoding graph 
corresponding to logical loop-like or path-like errors, see Section~\ref{sec:EC}.
For simplicity we shall first discuss the case when all edges
have the same error probability, $\pi(E)=p^{|E|}(1-p)^{n-|E|}$,
where $n=|\calE|$ is the total number of edges. 

We define an event as an arbitrary error chain  $E\subseteq \calE$. Accordingly,
$\Omega$ coincides with the set of all subsets of $\calE$. 
The subset of failure events  $\calF\subseteq \Omega$  consists of all uncorrectable chains
such that $\pi(\calF)$ is the probability of a logical loop-like or path-like error.
We shall first focus on the decoder-specific definition of correctability, see Section~\ref{sec:corr}.
Testing a membership $E\in \calF$ thus requires solving Problem~1 or~2.

We choose the family of distributions  $\pi_1,\ldots,\pi_t$ as
\[
\pi_j(E)=p_j^{|E|}(1-p_j)^{n-|E|}, \quad j=1,\ldots,t
\]
for a monotonic sequence of error rates $p_1,\ldots,p_t$ with $p_t=p$.
The following heuristic choice of the splitting sequence
was found to
provide a reasonable tradeoff between the statistical error and the number of splitting steps:
\begin{equation}
\label{steps1}
p_{j+1}=p_j 2^{\pm1/\sqrt{w_j}}, \quad w_j=\max{(d/2,p_jn)}.
\end{equation}
Here $d$ is the code distance and  the two signs correspond to upward and downward splitting.
To motivate this choice we note that
the quantity $w_j$ provides a rough estimate of  the average number 
of edges in a random chain $E\in \calF$ drawn from the distribution
$\pi_j(E|\calF)$. Since errors on different edges are independent and $p_j\ll 1$,
one should expect that the random variable $|E|$
is concentrated near its mean with the standard deviation $O(\sqrt{w_j})$.
Therefore
one can use a bound
\[
\left(\frac{p_{j+1}}{p_j}\right)^{-O(\sqrt{w_j})} \le \frac{\pi_{j+1}(E|\calF)}{\pi_j(E|\calF)} \le
\left(\frac{p_{j+1}}{p_j}\right)^{O(\sqrt{w_j})}
\]
for all `typical' error chains $E$. Here, for concreteness, we  consider upward splitting, that is, $p_j<p_{j+1}$. Then Eq.~(\ref{steps1}) implies
\begin{equation}
\label{steps2}
c^{-1}\pi_j(E|\calF)\le \pi_{j+1}(E|\calF) \le c\pi_j(E|\calF)
\end{equation}
for some constant $c=O(1)$.
Thus $\|\pi_j(\cdot|\calF)-\pi_{j+1}(\cdot|\calF)\|_1\le 1-1/c$
and  the ratio $\pi_{j+1}(\calF)/\pi_j(\calF)$ is estimated with 
an error $\sigma_j\le \sqrt{2cN^{-1}}=O(N^{-1/2})$, see Eq.~(\ref{SM9}).

In order to sample a chain $E\in \calF$ from the conditional distribution
$\pi(E|\calF)\equiv \pi(E)/\pi(\calF)$ we used a  Metropolis-type
subroutine. A single Metropolis step  
takes as input a chain $E\in \calF$ and outputs a new chain $E'\in \calF$
which differs from $E$ on at most one edge as described below.
\begin{enumerate}
\item Select an edge $e\in \calE$ at random from the uniform distribution.
Set $E'=E\oplus e$. 
\item Compute $q=\min{[1,\pi(E')/\pi(E)]}$ and generate a random
bit $b=0,1$ such that $\mathrm{Pr}(b=1)=q$.
\item If $b=0$ then stop and output $E$.
\item If $b=1$ and $E'\in \calF$  then output $E'$. 
Otherwise output $E$.
\end{enumerate}
For any chains $E,E'\in \calF$ let $P(E,E')$ be the probability that the Metropolis step
outputs $E'$ if called on the input $E$. 
One can easily check that $P$ obeys a detailed balance condition
\begin{equation}
\label{DB}
\pi(E)P(E,E')=\pi(E')P(E',E) \quad \mbox{for all $E,E'\in \calF$}.
\end{equation}
Thus the Metropolis step defines
a reversible Markov process  $\calM$ such that states of $\calM$
are uncorrectable error chains,  $P(E,E')$ is the transition probability from
$E$ to $E'$, and $\pi(E|\calF)$ is a steady distribution of $\calM$. 
A similar Markov process $\calM_j$ is constructed for each distribution
$\pi_j$ in the splitting sequence. 

The full Metropolis subroutines involves $M\gg 1$ Metropolis steps
starting from some fixed initial uncorrectable chain $E_0$.
The latter was chosen as a loop encircling a defect (for loop-like errors)
or a path connecting the two defects (for path-like errors).
The sequence of uncorrectable error chains $E_0,E_1,\ldots,E_M$ generated by the Metropolis subroutine 
was used to estimate the expectations values in Eq.~(\ref{SM5}).
In order for Eq.~(\ref{SM5})  to hold, the number of Metropolis steps
must satisfy $M\gg N\tau_j$, where $\tau_j$ is the mixing time of the Markov
process $\calM_j$. 
Since in practice the mixing time is unknown,  the number of steps
$M$ was chosen by checking two heuristic conditions: (i)  statistical fluctuations of the 
righthand side of Eq.~(\ref{SM5}) are smaller than the desired precision, (ii) doubling $M$
does not change the  righthand side of Eq.~(\ref{SM5}) by more than the desired precision. 
Our goal was to compute the logical error probability $\pi(\calF)$ with 
a relative error about $50\%$. Note that $\pi(\calF)$ changes by almost $20$
orders of magnitude in our simulations. A relative error $50\%$ is thus good enough
for all practical purposes. 
 Accordingly, we aimed at estimating the expectation
values in Eq.~(\ref{SM5}) with a relative error around $0.5/t$, where $t$ is the number 
of splitting steps. 
The required number of Metropolis flips (non-trivial steps) 
varied in the range $10^5$ to $10^7$ depending on the geometry, lattice dimensions,
and the error rate.

Note that some Metropolis steps may require
testing a membership $E'\in \calF$, which in turn requires
 solution of Problem~1 or Problem~2, see  Section~\ref{sec:EC}.
As the full Metropolis subroutine may involve millions of steps, a natural question
is whether the solution of Problems~1,2 
obtained at some Metropolis step $j$ can be `recycled' and used at the next step $j+1$.
Suppose the corresponding error chains differ on some edge $e$,
$E_{j+1}=E_j\oplus e$. To perform the standard reduction from Problems~1,2 to the minimum (maximum)
weight matching problem one has to construct a family of   minimum weight paths on the decoding
graph connecting any pair of syndrome vertices, see Section~\ref{sec:EC}.
Our implementation of the Metropolis subroutine recycles 
minimum weight paths found for the syndrome $S_j=\partial E_j\backslash T$
and uses them 
to construct minimum weight paths for 
the syndrome $S_{j+1}=\partial E_{j+1}\backslash T$.
This yields a significant speedup since the syndromes 
 $S_j$ and $S_{j+1}$ differ on at most two vertices.

In the case of loop-like logical errors  simulations were performed only for the 
single-defect geometry. This is a natural simplification since the decay rate $\alpha$ 
for loop-like errors does not depend on the
number of defects, as long as separation between defects is sufficiently large. 
In the case of path-like logical errors simulations were performed for the double-defect
geometry. The corresponding decoding graphs are shown on Fig.~\ref{fig:decoding}.

The splitting method can be similarly applied  to the minimum weight decoding
problem with a decoder-independent notion of correctability, see Section~\ref{sec:EC}.
In particular, for the decoding graphs shown on Fig.~\ref{fig:decoding}
the Metropolis step can be implemented in time $O(|\calV|^3)$ and the corresponding Markov process
$\calM$ is ergodic, see Lemmas~1-4.
Therefore,  in these special cases the conditional distribution $\pi(E|\calF)$ is the unique
steady state of $\calM$ and our results rigorously
prove correctness of the splitting method in the limit $M\to \infty$.

\section{Noisy syndrome readout}
\label{sec:noisy}

Here we describe the construction of the decoding graph and the implementation
of the splitting method in the case when the syndrome readout circuit itself
may introduce errors. The material of this section is mostly based on
Refs.~\cite{Dennis01,Fowler08,fowler2012topological}.

We begin by defining the noise model and the syndrome readout circuit.
Our set of elementary operations includes
CNOT gates, single-qubit measurements in the $X$- or $Z$-basis, and
preparation of single-qubit ancillary states $|0\rangle$ and $|+\rangle$.
The syndrome readout circuit consists of a sequence of rounds, 
where at each round any qubit can participate in one elementary operation
or remain idle. 
Each elementary operation can fail
with a probability $p$ that we call an {\em error rate}. More precisely,
our error model, borrowed from~\cite{Fowler08},  is defined as follows.
\begin{itemize}
\item A noisy $X$ or $Z$ measurement is the ideal measurement in which the outcome is flipped
with probability $p$.
\item A noisy $|0\rangle$ or $|+\rangle$ ancilla preparation returns the correct state with probability $1-p$ and the
 orthogonal state $|1\rangle$ or $|-\rangle$  with probability $p$.
\item A noisy CNOT gate is the ideal CNOT gate followed by
one of $16$ two-qubit Pauli operators $P$. We apply $P=I$ with probability $1-p$
and each individual $P\ne I$ with probability $p/15$.
\item If a qubit remains idle during some round, it is acted upon by $X$, $Y$ or $Z$
error with probability $p/3$ each (``memory error''). 
\end{itemize}

\begin{figure}[htb]
\centerline{\includegraphics[height=7cm]{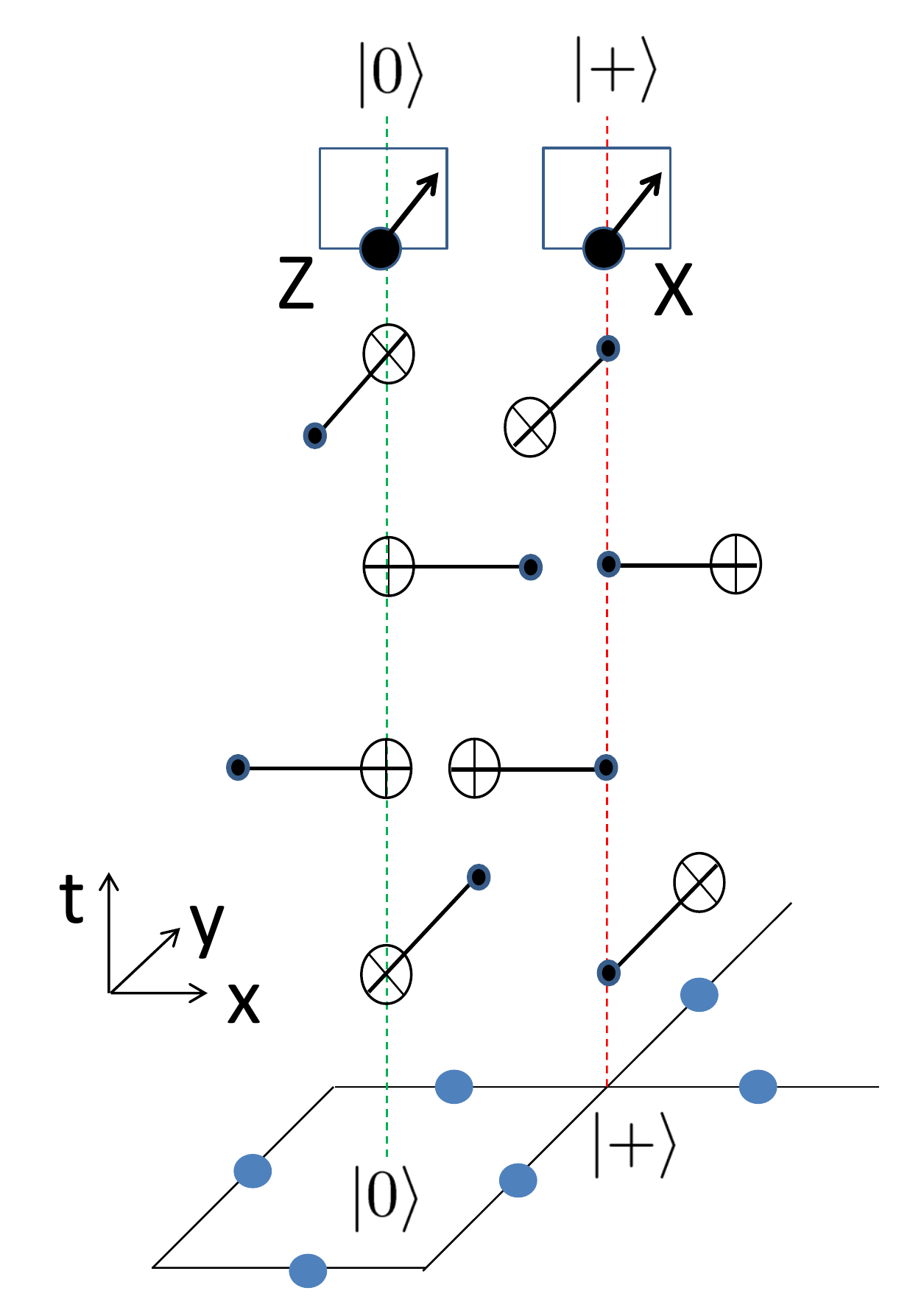}}
\caption{(Color online) Syndrome readout circuits for plaquette and site stabilizers.\label{fig:readout}}
\end{figure}

Following Refs~\cite{Dennis01,Fowler08}, we measure eigenvalues of site and
plaquette stabilizers using the quantum circuit shown in
Fig.~\ref{fig:readout}.  Measuring a single stabilizer requires one ancillary
qubit and six rounds. In the case of truncated stabilizers located near the
boundary of defects, some CNOT gates in the circuit of Fig.~\ref{fig:readout}
are skipped and the corresponding ancillary qubits remain idle. The ancillary qubits are
located at the centers of plaquettes and at sites of the physical lattice.  The
syndrome readout is repeated periodically in time until enough syndrome data is
collected to enable reliable error correction, see below.  We assume that the
rounds are scheduled such that a new set of syndrome data from every stabilizer
arrives at each integer time step $t$ (accordingly, each round takes $1/6$
units of time). For simplicity, we assume that 
defects are not added, changed, or annihilated during the collection of 
syndrome data; that is, we are only examining storage of information in time.

The error correction protocol is tailored to the chosen noise model and the
syndrome readout circuit~\cite{Fowler08}. As before, the key ingredient in the
protocol is a decoding graph $\calG=(\calV,\calE)$ and we again need two
independent decoding graphs for dealing with phase-flip and bit-flip errors
separately. For concreteness, below we focus on bit-flip errors.  The vertices
of $\calG$ can be partitioned into two disjoint subsets, $\calV = V \cup T$.
Each vertex in $V \subseteq \calV $ is a pair $u=(p,t)$ that represents the
space-time location of a syndrome bit measured at time step $t$ at a plaquette
$p$.  Let $S\subseteq \calV$ be the set of all vertices $(p,t)$ such that the
syndrome bits measured at the plaquette $p$ at time steps $t$ and $t+1$ are
different.  We shall refer to $S$ as a {\em relative syndrome}.  Clearly,
$S=\emptyset$ in the absence of errors. Suppose now that the syndrome readout
circuit contains a single {\em error event}, that is, any non-identity Pauli
operator applied as an error in any single elementary operation or idle step.
It can be shown that a relative syndrome caused by any single error event in
the circuit consists of at most two vertices, see~\cite{Fowler08}.  We connect
a pair of vertices $u,v\in \calV$ by an edge iff the relative syndrome
$S=\{u,v\}$ can be created by a single error event. Some error events located
near the boundary of a defect create a relative syndrome at a single vertex
$u$. Such an error event is represented on the decoding graph by a ``hanging
edge" attached to $u$.  Each vertex $u \in V$ is attached to at most one
hanging edge; the other endpoint of this hanging edge has degree one and is an
element of $T \subseteq \calV$.  The set $T$ solely consists of the hanging
edge endpoints that are not in $V$.  

The decoding graph corresponding to the
surface code lattice of Fig.~\ref{fig:flat} for bit-flip errors is shown on
Fig.~\ref{fig:3D}.  To avoid clutter, we represent a hanging edge attached to
some vertex $u \in V$ by a solid circle centered at $u$.  A typical vertex of the
decoding graph has $12$ incident edges, see Fig.~\ref{fig:3Dzoom} for detail.
Generally, memory errors are represented by edges oriented along the $x$ or $y$
axes, while measurement and initialization errors lead to edges oriented along
the $t$ axis.  Any type of edge can be observed due to the two-qubit errors
(which occur after each of the rounds of CNOT gates), and many of the diagonal
edges can only be observed in this way.  Two edges of the same orientation may
be created by different errors depending on their proximity to defects.  The
decoding graph describing phase-flip errors is constructed in a similar fashion
(hanging edges on the phase-flip decoding graph are not needed for the defects examined here).

By construction, every edge $e$ of the decoding graph represents some set of
error events $\Omega_e$ in the syndrome readout circuit (those that create
relative syndromes at the endpoints of $e$).  The sets $\Omega_e$ corresponding
to different edges are disjoint.  Define a {\em prior} $p(e)$ as the
probability of observing an {\em odd} number of error events in $\Omega_e$ upon
execution of the circuit (since errors add up modulo two, only the parity of
the number of errors matters).  As argued in~\cite{Fowler08}, knowledge of the
priors significantly improves decoding success probability.  Edges of the
decoding graph with large prior probabilities can be regarded as more noisy and
should be preferred over less-noisy edges when choosing a recovery chain.  The
time-like edges and the diagonal edges are the most and the least noisy
respectively. We estimated the priors $p(e)$ by summing up the
probabilities of all error events in the set $\Omega_e$.  The priors are
represented by a color scale on Fig.~\ref{fig:3D}.

%
Any combination of error events in the circuit can be represented by an error
chain $E\subseteq \calE$ in the decoding graph such that $e\in E$ iff the set
$\Omega_e$ contains an odd number of the error events that occurred when the
circuit was run.  Given an edge $e\in \calE$, let $\Pi(e)\subseteq \Sigma_1$ be the
`projection' of $e$ onto the 2D surface code lattice.  More precisely, if
$e=(u,u')$ for some vertices $u=(p,t)$ and $u'=(p',t')$ then $\Pi(e)$ consists
of the edges making up a minimum weight path between $p$ and $p^\prime$
for $p\ne p'$ and $\Pi(e)=\emptyset$ otherwise.  Note that when $p \ne p^\prime$, 
$\Pi(e)$ contains one edge except in some of the cases where $e$ is a diagonal edge.
Given any chain
$E=\{e_1,e_2,\ldots,e_m\}\subseteq \calE$ on the decoding graph, the
corresponding accumulated 2D error chain on the surface code qubits is 
\[
\Pi(E)=\Pi(e_1)\oplus \Pi(e_2) \oplus \ldots \oplus \Pi(e_m)\subseteq \Sigma_1.
\]
Our goal is to use the syndrome information to correct the error chain
$\Pi(E)$.  By construction, $(\partial E)\backslash T =S$, where $S\subseteq
\calV$ is the relative syndrome and $T\subseteq \calV$ is defined above.  Thus
finding the most likely error chain consistent with a given relative syndrome
$S$ is equivalent to solving Problem~1 or~2, see Section~\ref{sec:EC}.  Let
$R\in \calC_{min}(S,T)$ be a minimum weight recovery chain constructed by the
decoder.  The recovery operator corresponding to $R$ is determined by the 2D
projection $\Pi(R)$.  An error chain $E$ is called correctable iff
$\Pi(R)\oplus \Pi(E)$ has even overlap with the relevant logical chain
$\Gamma$, see Section~\ref{sec:corr}.  Equivalently, $E$ is correctable iff
$R\oplus E$ has even overlap with a 3D logical chain $\hat{\Gamma}\subseteq
\calE$ that includes all edges $e\in \calE$ such that $\Pi(e)\cap \Gamma$ is not empty. In
the case of bit-flip errors, $\hat{\Gamma}$ is the set of all hanging edges
located on the boundary of the left defect tube, see Fig.~\ref{fig:3D}. For
phase-flip errors, $\hat{\Gamma}$ can be visualized as a `membrane' connecting
the defect tube to the external spatial boundary of the lattice (not shown).
Let $\Omega$ be the set of all error chains and $\calF\subseteq \Omega$ be the set of uncorrectable chains.

Apart from the different definition and interpretation of the decoding graph,
the implementation of the splitting method is exactly the same as described in
Section~\ref{sec:implement}.
Our simulations were performed for a phenomenological noise model
where errors on different edges of the decoding graph occur independently
with probabilities $p(e)$. 
In order to evaluate the quantity $P_L(p)$ for a
given probability $p$, we used a family of distributions $\pi_1,\ldots,\pi_t$
defined as
\[
\pi_j(E)=\prod_{e\in E} p_j(e) \prod_{e\in \calE\backslash E} (1-p_j(e)),
\]
where $p_j(e)$ are the priors computed for a monotone decreasing sequence of error rates
$p_1,\ldots,p_t$ such that $p_t=p$. The sequence $p_1,\ldots,p_t$ is defined by the heuristic
rule Eq.~(\ref{steps1}), where $w_j=\sum_{e\in \calE} p_j(e)$.
Simulations were performed  only
for defects with linear size $r=2,3,4$, 
partly due to the growing running time of the Metropolis subroutine and partly due to
computer memory limitations
(for the double defect geometry with $r=4$ the lookup table of minimum weight paths on the decoding graph
takes about 4GB of RAM).  For path-like logical errors  the decoding graph shown
 in Fig.~\ref{fig:3D} is a 3D lattice with a pair
 of vertical defect tubes. The decoding graph corresponding to loop-like logical errors
 is similar to Fig.~\ref{fig:3D}, but there is only one vertical defect tube.
By combining the splitting method and the Monte Carlo data 
we were able to compute parameters of the fitting formula Eqs.~(\ref{Fit},\ref{xy})
which we expect to be valid for larger code distances. 
The decay rate $\alpha(p)$ in the exponential scaling
$P_L\sim \exp{[-\alpha(p)r]}$ is shown in Fig.~\ref{fig:alpha}.

\section{Conclusion and open problems}
\label{sec:concl}

We proposed a new algorithm for estimating the logical error probability 
of the surface code in the regime of large code distances
and moderately small error rates. Numerical results are presented for two commonly studied error
models corresponding to noiseless and noisy syndrome extraction. 
Our results demonstrate that the asymptotic formulas for the logical error probability $P_L(p)$
valid in the limit $p\to 0$  tend to underestimate $P_L(p)$ for finite error rates. 
A more accurate fitting formula for $P_L(p)$ is proposed. 

Our work certainly leaves many important questions unanswered. 
First, one may ask whether our simulation techniques can be extended to non-trivial logical
gates, such as the CNOT gate, or more complicated logical circuits such as the topological state
distillation~\cite{FowlerBridge}. Each of these circuits can be visualized as a network of defect tubes
embedded into a 3D space-time~\cite{FowlerBridge}. We anticipate that the logical error probability $P_L$
associated with a large network of tubes can be estimated by decomposing the network into small tiles that 
consist of single isolated tube segments or parallel pairs of such segments. The techniques presented in this paper are applicable to 
each individual tile. Therefore one can get a rough estimate of $P_L$ by  summing up  logical error probabilities associated with each tile.

From the theoretical perspective, it is desirable to derive rigorous bounds on the
running time and the approximation error of the algorithm. This, in turn, requires
upper bounds on the mixing time of the Metropolis subroutine described in Section~\ref{sec:implement}.
We conjecture that the mixing time scales as $p^{-\Omega(d)}$ for a general distance-$d$
surface code with multiple defects in the limit $p\to 0$. The intuition behind this conjecture is that 
minimum-weight uncorrectable error chains that are localized on the boundary of different
defects cannot be connected by a sequence of local Metropolis steps without passing
 through intermediate high-weight uncorrectable  error chains.
However, if the lattice contains a single defect (in the case of loop-like errors)
or a pair of defects (in the case of path-like errors), see Fig.~\ref{fig:decoding},
it is plausible that the mixing time is a sub-exponential function of $d$. 

One can also explore possible generalizations of our algorithm
to different noise models, such as the true circuit-based noise model, see Ref.~\cite{Fowler08},
and different stabilizer codes. Finally, we expect that our fitting formula
for the logical error probability can be refined by  taking
into account the pre-exponential factor depending on $d$
as was proposed in Ref.~\cite{RHG07}.

\begin{figure*}[h]
\includegraphics[scale=0.65]{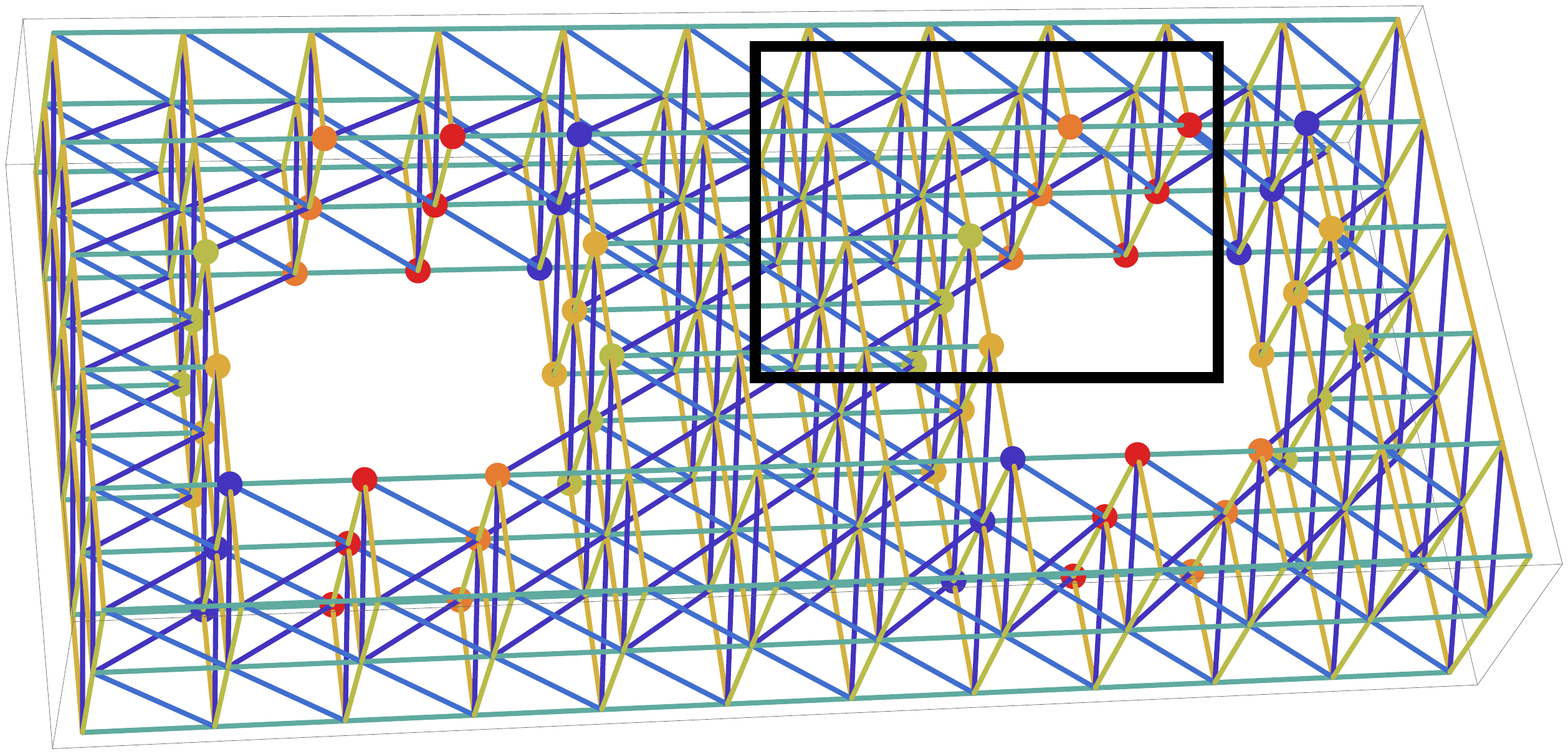}
\caption{\label{fig:3D}
(Color online) {\em Noisy syndrome readout.} 
The decoding graph corresponding to
the surface code lattice shown in Fig.~\ref{fig:flat}.
Each vertex of the graph represents a space-time location of a syndrome measurement.
Any elementary error in the syndrome readout circuit (memory, preparation, measurement, or CNOT error)
is associated with some edge of the graph. The overall probability of elementary errors
associated with a given edge $e$ determines the effective error rate of $e$. Red (blue) color stands for the
largest (smallest) effective error rate. Some edges located near the boundary of
the defects have only one end-point. To avoid clutter, such edges are represented by solid circles. 
The corresponding surface code has two smooth defects 
of linear size $r=2$, separation $s=5$, and buffer length $b=3$.
The number of syndrome readout rounds is $t=3$.
The actual simulations were performed for $s=b=t=4r$ for $r=2,3,4$.
}
\end{figure*}

\begin{figure}[h]
\includegraphics[scale=0.3]{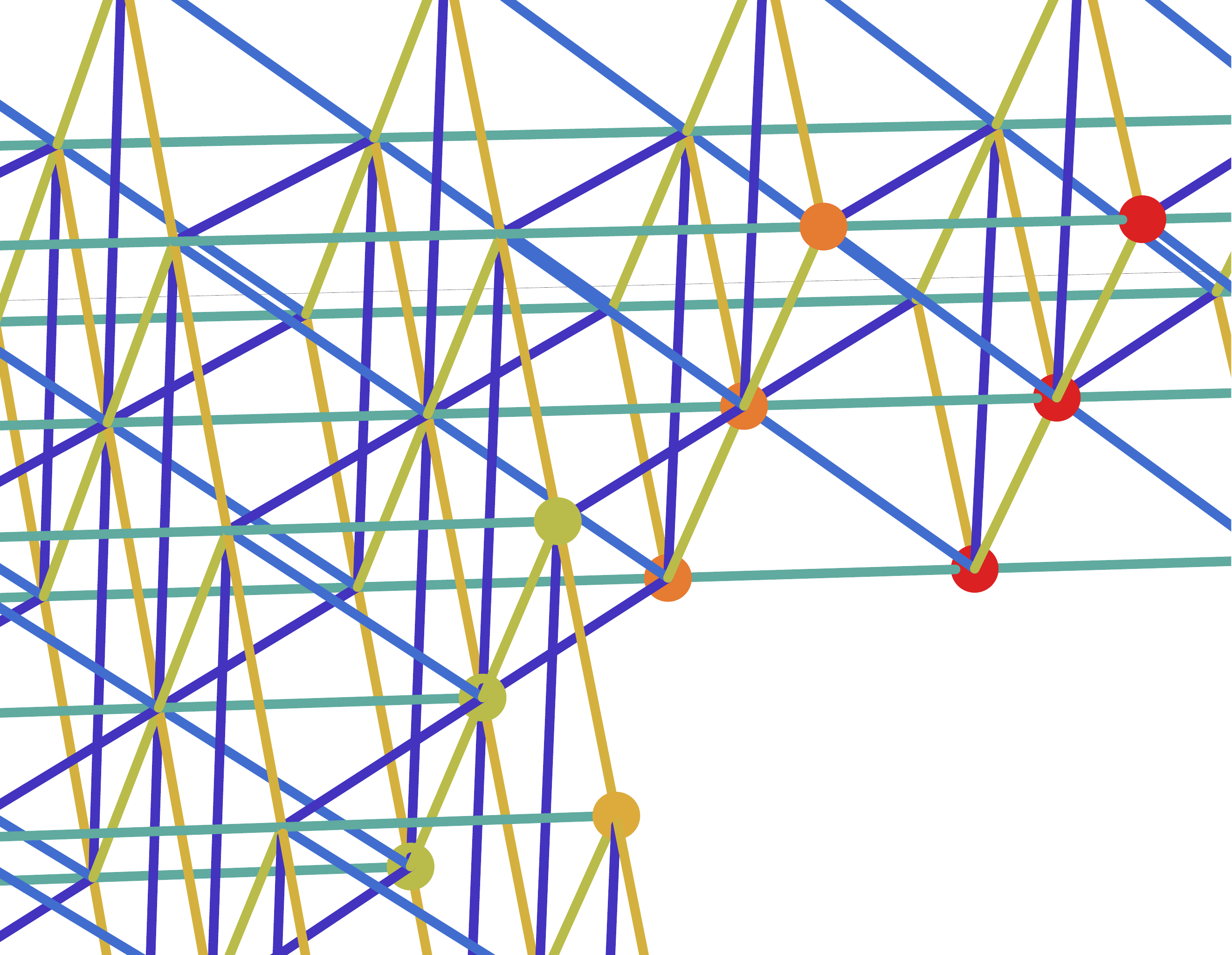}
\caption{\label{fig:3Dzoom}
A fragment of the decoding graph indicated by the black rectangle on Fig.~\ref{fig:3D}.
}
\end{figure}

\begin{center}
{\bf Acknowledgments}
\end{center}
We would like to thank Charles Bennett, Martin Suchara, and
Rajan Vadakkedathu for helpful discussions.
This work was supported in part by 
IARPA QCS program under contract number D11PC20167
and  by the DARPA QuEST program under contract number HR0011-09-C-0047.
Computational resources for this work were provided by IBM Blue Gene Watson supercomputer center.


\end{document}